\newtheorem{theorem}{Theorem}
\newtheorem{lemma}{Lemma}
\newtheorem{scheme}{Scheme}   
\newtheorem{corollary}{Corollary}
\newtheorem{proposition}{Proposition}
\newtheorem{assumption}{Assumption}
\newtheorem{remark}{\bf Remark}
\def\qed{$\Box$}
\def\proof{\noindent{\emph{Proof:} }}
\def
\def\endproof{\hspace*{\fill}~\qed\par\endtrivlist\vskip3pt}
\def\phi{\varphi}
\def\l{\left}
\def\r{\right}
\def\({\left(}
\def\){\right)}
\def\bc{{\mathbf{c}}}
\def\bff{{\mathbf{f}}}
\def\bg{{\mathbf{g}}}
\def\bh{{\mathbf{h}}}
\def\bl{{\mathbf{l}}}
\def\bs{{\mathbf{s}}}
\def\bv{{\mathbf{v}}}
\def\bx{{\mathbf{x}}}
\def\b0{{\mathbf{0}}}
\newcommand{\m}[1]{\mathbf{#1}}
\newcommand{\hv}{\m{h}}
\newcommand{\nn}{\nonumber}
\newcommand{\Quan}{\mathsf{Q}}
\begin{document}

\title{\huge High-Dimensional Stochastic Gradient Quantization for Communication-Efficient Edge Learning}

\author{Yuqing Du, Sheng Yang and Kaibin Huang \vspace{-2mm} \\
\thanks{ Y. Du and K.~Huang are with  The University of Hong Kong, Hong
Kong (Email: yqdu@eee.hku.hk, huangkb@eee.hku.hk). S.Yang is with
Laboratory of Signals and Systems, CentraleSup\'elec, University of Paris-Saclay, 91190 Gif-sur-Yvette, France (e-mail: sheng.yang@centralesupelec.fr).}
}

\maketitle

\vspace{-20pt}

\begin{abstract}
Edge machine learning involves the deployment of learning algorithms at the wireless network edge so as to leverage massive mobile data  for enabling intelligent applications. The mainstream edge learning approach, federated learning, has been developed based on distributed gradient descent. Based on the approach, stochastic gradients are computed at edge devices and then transmitted to an edge server for updating a global AI model. Since each stochastic gradient is typically high-dimensional (with millions to billions of coefficients), communication overhead becomes a bottleneck for edge learning. To address this issue, we propose in this work a novel framework of hierarchical stochastic gradient quantization and study its effect on the learning performance.  First, the framework features a practical hierarchical architecture for decomposing the stochastic gradient into its norm and normalized block gradients, and efficiently quantizes them using a uniform quantizer and a low-dimensional codebook on a Grassmann manifold, respectively. Subsequently, the quantized normalized block gradients are scaled and cascaded to yield the quantized normalized stochastic gradient using a so-called hinge vector designed under the criterion of minimum distortion. The hinge vector is also efficiently compressed using another low-dimensional Grassmannian quantizer. The other feature of the framework is a bit-allocation scheme for reducing the quantization error. The scheme divides the total bits from gradient quantization to determine the resolutions of the low-dimensional quantizers in the proposed framework. The framework is proved to guarantee model convergency by analyzing the convergence rate as a function of the quantization bits. Furthermore, by simulation, our design is shown to substantially reduce the communication overhead compared with the state-of-the-art signSGD scheme, while both achieve similar learning accuracies.

\end{abstract}

\section{introduction}
Recently, a large amount of data are generated in real-time and distributed at edge devices (e.g., smartphones and sensors). To allow rapid access to the enormous real-time data generated by edge devices for \emph{artificial intelligence} (AI)-model training, several edge learning frameworks such as \emph{federated edge learning} (FEEL) have been developed based on distributed stochastic gradient descent~\cite{zhu2018towards,konevcny2016federated}. Based on the approach, stochastic gradients are computed at edge devices and then transmitted to an edge server for aggregation and then updating a global AI-model. Typical stochastic gradients are of high dimensionality (each constitutes e.g., millions of parameters). Thus their transmission over communication networks can result in extensive overhead and a bottleneck for fast edge learning. To tackle this challenge, numerous schemes have been developed for  compressing stochastic gradients to reduce the said communication overhead~\cite{tandon2017gradient,chen2018lag,zhang2017zipml}. However, due to gradients' high dimensionality, most of the existing schemes focus merely on scalar quantization. The area of high-dimensional \emph{vector quantization} (VQ) targeting stochastic gradients is largely uncharted. This motivates us to make the first attempt on filling the void. Specifically, we propose a novel hierarchical vector quantization scheme using low-dimensional Grassmannian codebooks. By both simulation and theoretic analyses, this scheme is shown to be of low-complexity, communication-efficient, and guarantee learning convergence.

\subsection{Stochastic Gradient Quantization}
Recently, the topic of stochastic gradient quantization has been attracting growing interests for its being a key approach for improving the communication efficiency of edge learning~\cite{alistarh2017qsgd,2018signsgd,wu2018error,zheng2019communication, kingma2014adam, gersho2012vector}. In~\cite{alistarh2017qsgd}, a scheme called ``Quantized SGD" (QSGD) is proposed, where a scalar quantizer is deployed and its efficiency is improved by Elias integer coding exploiting the distribution of quantized gradient values. Building on QSGD, the effect of the quantization error can be further alleviated using an error-compensation scheme presented in~\cite{wu2018error}. To be specific, the accumulated quantization error is exploited to accelerate the model convergence. A recent key advancement in the area is the finding that despite its supposed low resolution, the combination of one-bit scalar quantizer for gradient-coefficient quantization, named ``signSGD"~\cite{2018signsgd}, and momentum in descent can be proved to achieve a convergence rate of the same order as its counterpart without quantization, namely the famous ``ADAM" scheme~\cite{kingma2014adam}. The promising result has motivated a series of followup work. For example, the original signSGD can be improved by dividing the large-number of one-bit coefficients of a quantized gradient into blocks and scaling each block by the norm of the unquantized counterpart~\cite{zheng2019communication}. Such modifications are shown to accelerate learning. All the above  schemes are based on scalar quantization. There are few results on the VQ of gradients despite its being a well-developed area~\cite{gersho2012vector}.

VQ, namely the joint quantization of the entries of a vector, is required to achieve the optimal rate-distortion trade-off~\cite{gersho2012vector}. Such asymptotic optimality, however, comes at the price of an exponentially growing complexity with the vector length. This makes it infeasible to directly apply the classic VQ algorithms to the quantization of high-dimensional stochastic gradients and explains the current popularity of scalar quantization for stochastic gradient quantization in edge learning. However, the effectiveness of VQ proven in conventional data compression suggests its potential for improving the communication efficiency for edge learning. This motivates the current work on designing a new VQ framework for stochastic gradient compression targeting SGD.

\subsection{Grassmannian Quantization in Wireless Communication}
A Grassmann manifold refers to a space of lines or subspaces embedded in a higher-dimensional space. A quantizer for partitioning the manifold typically uses a Grassmannian codebook that comprises a set of lines or subspaces and a subspace distance as the distortion measure (e.g., the sine of two lines' separation angle). Consequently, the quantizer attempts to minimize the deviation in direction between a line and its quantized version, or the deviation in orientation for the case of a subspace. Thus, Grassmannian quantization is a suitable tool for compressing data containing information on vector direction or subspace orientation. 

In wireless communication, Grassmannian quantization is widely adopted in one particular area, limited feedback, for efficient feedback of a quantized beamformer/precoder from a receiver to a transmitter to enable adaptive multi-antenna transmission~\cite{love2008overview,mukkavilli2003beamforming,love2003grassmannian,raghavan2007systematic,kim2011mimo,huang2009limited}. In~\cite{mukkavilli2003beamforming}, for beamformer quantization and feedback, a randomly generated Grassmannian codebook, which comprises a set of unitary vectors uniformly distributed on the Grassmann manifold, is proved to be asymptotically optimal under the criterion of rate maximization as the codebook size grows. On the other hand, for a finite codebook size and a MIMO channel with rich scattering, an important finding is reported in~\cite{mukkavilli2003beamforming, love2003grassmannian} that the optimal beamforming/precoding codebook design can be translated into the mathematical problem of Grassmannian line/subspace packing. The result, however, may not hold for correlated channels. This motivates a vein of research on developing systematic methods for Grassmannian codebook construction targeting spatially/temporally correlated channels, where the correlation is exploited for reducing the required codebook size and hence the feedback overhead~\cite{raghavan2007systematic,kim2011mimo,xia2006design,huang2009limited}. The latest research in the area is focused on next-generation massive MIMO communication with large-scale arrays. The resultant channels are high-dimensional and thus the limited feedback techniques developed previously for small-scale MIMO cannot be directly applied. The main approach for overcoming the challenge is to decompose a massive MIMO channel into the low-dimensional slow and fast time varying components, corresponding to channel spatial correlation and small-scale fading, respectively~\cite{nam2012joint,choi2013noncoherent,sim2016compressed}.  Then periodic feedback is needed only for a beamformer/precoder matched to the small-scale MIMO fading channels. The low-dimensional feedback can be then compressed using a traditional Grassmannian quantization technique,  thereby reining in feedback overhead. 

Stochastic gradient quantization for FEEL is related to limited feedback in that they both aim at compressing a vector for efficient transmission to convey directional information. To be specific, one critical information conveyed by a stochastic gradient is the descending direction on a surface generated by a given loss function, which measures the learning accuracy. Though Grassmannian quantization seems to be a suitable tool for gradient compression, the direct application is impractical as the codebook size and computation complexity both increase exponentially with the gradient's dimensionality. The techniques developed in the other area of limited feedback for massive MIMO with large-scale channels cannot be transferred to high-dimensional gradient quantization. The reason is that the former's effectiveness hinges on the channel's structural and multi-time-scale properties but no similar counterparts exist for stochastic gradients. This calls for the development of a new approach for high-dimensional gradient quantization.

\subsection{Contributions and Organization}
This work addresses the issue of practical quantization of high-dimensional stochastic gradients to realize communication-efficient FEEL in a wireless system. To this end, we propose a novel framework of hierarchical gradient quantization based on gradient decomposition and low-dimensional Grassmannian quantization. The effect of the framework on the learning performance is characterized by analyzing the model convergence rate as a function of the number of quantization bits, which quantifies the communication overhead. To the best of the authors' knowledge, this work represents the first attempt on applying Grassmannian quantization to the compression of high-dimensional stochastic gradients. 

The specific contributions of this work are summarized as follows. 

\begin{itemize}
\item \textbf{Hierarchical Quantization Architecture}: The practicality of the proposed framework arises from a hierarchical architecture for intelligent gradient decomposition and low-dimensional component quantization. First, the stochastic gradient is decomposed into its norm and the normalized stochastic gradient. The norm is easily compressed using a scalar quantizer. Next, as a key feature of the framework, the high-dimensional normalized stochastic gradient is intelligently decomposed into 1) a set of equal-length unitary vectors called normalized block gradients and 2) a mentioned hinge vector, which is also unitary and integrates normalized block gradients to yield the normalized stochastic gradient. Such decomposition has two advantages. The hinge vector harnesses a certain level of high-dimensional VQ gain even though only practical low-dimensional quantizers are deployed. The other advantage is that the unitary nature of normalized block gradients and hinge vectors allow them to be efficiently compressed using two Grassmannian quantizers.

\item \textbf{Bit-Allocation Scheme}: Under an overhead constraint, the total number of bits from quantizing a stochastic gradient is fixed. The allocation of the bits to control the resolutions of the quantizers in the proposed framework can be optimized under the criterion of minimum distortion, yielding a bit-allocation scheme. By average distortion analysis, as the length of block gradients increases, it can be proved that the randomness of the hinge vector diminishes and it converges to a known fixed point (a vector) on the Grassmann manifold, which is a vector and denoted as $\bh_{\sf ref}$. This suggests that in this asymptotic regime, all bits should be allocated to quantizing and transmitting normalized block gradients and the stochastic gradient norm with $\bh_{\sf ref}$ being used at the server as a surrogate for the hinge vector. Otherwise, in the non-asymptotic regime, the optimal bit-allocation is derived in closed-form by minimizing the sum distortion.

\item \textbf{Analysis of Learning Convergence Rate}: It is proved that the proposed hierarchical quantization scheme leads to the convergence of the FEEL algorithm even if the loss function is non-convex. The specific findings are three-fold: First, given a large number of edge devices, the convergence rate is asymptotically $O\l(\frac{1}{\sqrt{N}}\r)$ with $N$ denoting the total number of iterations; Second, the quantization error leads to a biased term on the upper bound of the expected gradient norm, where the increment of quantization bits reduces the value of this biased term, giving rise to a faster convergence speed; Third, given the quantization bits, this bias vanishes at the rate of $O(\frac{1}{K})$ with $K$ denoting the total number of edge devices, which also accelerates model convergence.

\end{itemize}
 
 \emph{Organization}: The remainder of the paper is organized as follows. Section~\ref{System model} introduces the FEEL system model and provides the problem formulation. Section~\ref{Hierarchical scheme} presents the hierarchical quantization scheme. The distortion analysis for the proposed scheme is given in Section~\ref{Distortion_analyses}, building on which, the bit-allocation strategy is derived in Section~\ref{Bit_allocation}. Section~\ref{Learning_rate_analysis} presents the convergence rate analysis of the learning algorithm with the proposed hierarchical quantization. Simulation results are provided in Section~\ref{simulation} followed by concluding remarks in Section~\ref{Conclusion}. 
 

\section{System Model and Problem Formulation}\label{System model}
Consider a FEEL system as illustrated in Fig.~\ref{Fig:system_model},
where an edge server trains an AI model (e.g., a classifier), represented by the parameter set $\boldsymbol\theta$, using training datasets distributed among $K$ edge devices. 

To facilitate the learning, the loss function measuring the model error is defined as follows. Let $\{\mathcal D_{k}\}$ denote the local dataset collected at the $k$-th edge device. The local loss function of the model vector $\boldsymbol\theta$ on $\{\mathcal D_{k}\}$ is given by
\begin{align}\label{LocalLoss}
\text{(Local loss function)}\quad
f_{k}(\boldsymbol\theta) = \frac{1}{|\mathcal D_{k}|}\sum_{(\bx_i,y_i) \in \mathcal D_{k}}f(\boldsymbol\theta, \bx_i,y_i),
\end{align}
where $f(\boldsymbol\theta, \bx_i,y_i)$ is the sample-wise loss function quantifying the prediction error of the model $\boldsymbol\theta$ on the training sample $\bx_i$ w.r.t its ground-true label $y_i$. Without loss of generality, by assuming uniform sizes for local datasets: $\{\mathcal D_{k}\} = D$, the global loss function of the model vector $\boldsymbol\theta$ on all distributed local datasets can be written as 
\begin{align}
\text{(Global loss function)}\quad
F(\boldsymbol\theta) = \frac{1}{K}\sum_{k=1}^{K}f_{k}(\boldsymbol\theta).
\end{align}
 
The learning process is to minimize the global loss function $F(\boldsymbol\theta)$, which can be mathematically written as 
\begin{equation}
\boldsymbol\theta^* = \arg\min F(\boldsymbol\theta).
\end{equation}

\begin{figure}[t]
  \centering
\includegraphics[width=0.75\textwidth]{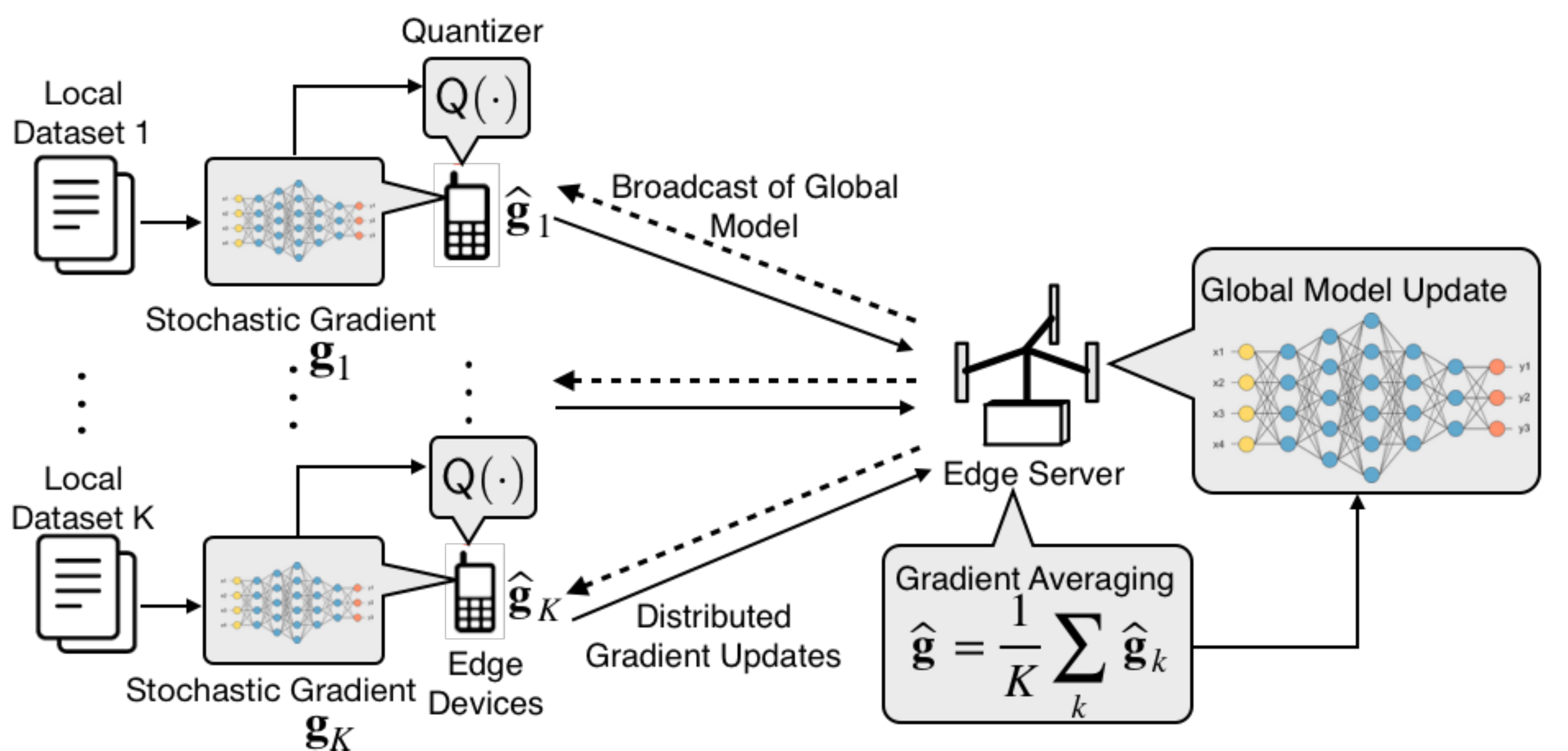} 
     \caption{Federated edge learning system.}
  \vspace{1mm}
  \label{Fig:system_model}
\end{figure}

In the context of FEEL, the gradient-averaging implementation is proposed in~\cite{konevcny2016federated} to tackle the privacy issue by avoiding uploading all the local data. Specifically, in each iteration, say the $n$-th iteration, the edge server broadcasts the current model under training $\boldsymbol\theta[n]$ to all edge devices. Based on the received current model $\boldsymbol\theta[n]$, each device computes the stochastic gradient by differentiating the local loss function defined in~\eqref{LocalLoss}. Mathematically, for device $k$, the stochastic gradient of the $n$-th iteration can be computed as:
\begin{equation}\label{StoGrad}
\text{(Stochastic gradient)}\quad \bg_k[n] = \nabla  f_{k}(\boldsymbol\theta[n]),
\end{equation}
where $\nabla$ denotes the gradient operation. Conventionally, upon its completion, the local gradients are sent to the edge server for averaging. However, in practical applications, communicating the gradients in each iteration has been observed to be a significant performance bottleneck~\cite{alistarh2017qsgd}, which will become exacerbated especially when the gradient is dense. This motivates the lossy compression of the gradients before transmission, mathematically defined as follows:
\begin{equation}\label{QuantSGD}
 \text{(Stochastic gradient quantization)}\quad \widehat{\bg}_k[n] = \Quan(\bg_k[n]),
\end{equation}
where $\Quan(\cdot)$ maps any point
$\bg$ in
$\mathbb R^{\sf Dim \times 1}$ to one of the codewords in the codebook $\mathcal C$, i.e. $\widehat{\bg}_k[n]\in\mathcal C$ with $\widehat{\bg}_k[n]$ denoting the quantized version of the stochastic gradient $\bg_k[n]$. Rather than conveying the quantized version, the edge devices communicate the codeword index to the edge server. It is further assumed that the edge server has the knowledge of the codebook $\mathcal C$ and the codeword index is perfectly transmitted. It means that the quantized version of the stochastic gradients can be perfectly transmitted. Then, by averaging all the quantized stochastic gradients, the approximated global gradient can be computed as:
\begin{equation}\label{ApproGloGrad}
\text{(Approximated Global gradient)}\quad \widehat{\bg}[n] = \frac{1}{K}\sum_{k = 1}^{K}\widehat{\bg}_k[n],
\end{equation}
where $\widehat{\bg}[n]$ denotes an estimate of the global gradient at the $n$-th iteration. Then, the global model $\boldsymbol\theta$ is updated as follows:
\begin{equation}\label{ModelUpdate}
\text{(Model updating)}\quad \boldsymbol\theta[n+1] = \boldsymbol\theta[n] - \eta\widehat{\bg}[n],
\end{equation}
where $\eta$ is the step size. The learning process involves the iteration from~\eqref{StoGrad} to~\eqref{ModelUpdate} until the model converges. 

Let us define the normalized stochastic gradient $\bff =
\frac{\bg}{\lVert\bg\rVert}$ and the norm of the stochastic gradient $\rho =
\lVert\bg\rVert$. For analytical tractability, we make the following
assumption. 
\begin{assumption}
  The normalized stochastic gradient $\bff$ is uniformly distributed on the
  Grassmann manifold. 
\end{assumption}%
In other word, we assume that $\bg$ is isotropic~(i.e., statistically
invariant under unitary transformation). To support our assumption, we have
run a hypothesis test on  
a real stochastic-gradient-dataset. Specifically, we apply the
\emph{Kolmogorov-Smirnov test} (KS-test), which is a nonparametric
hypothesis test quantifying a distance between the empirical
distribution function and the referenced one. Let the null hypothesis
$\mathcal H_0$ be such that $\bff$ is uniformly distributed, and 
$\mathcal H_1$ the alternative one. The obtained \emph{p-value}, which indicates
whether to reject or accept the null hypothesis $\mathcal H_0$, is close
to $0.1$. Given that the commonly used threshold for $p$-value is
$0.05$, this result suggests that the null hypothesis $\mathcal H_0$
will be accepted, i.e., the normalized stochastic gradient is believed to be uniformly distributed on the Grassmann manifold.

We are interested in the \emph{mean squared error}~(MSE) of the stochastic gradient
quantization problem. For a given \emph{quantization codebook}~$\mathcal C$ of
$B$~bits~(i.e., containing $2^B$ codewords), the optimal quantization function in the MSE sense is
such that $ \Quan_{\mathcal{C}}^E(\bg) \in
\arg\min_{\hat{\bg}\in\mathcal{C}} \| \bg - \hat{\bg} \|^2$. And
the distortion of the quantizer is denoted by $D_{\mathcal{C}}^E =
\mathbb{E}\left\{ \| \bg - \Quan_{\mathcal{C}}^E(\bg) \|^2 \right\}$. 
Here, the superscript  `E' stands for Euclidean distance. The optimal
codebook is therefore 
\begin{equation}
  \mathcal{C}^* \in \arg\min_{\mathcal{C}} D_{\mathcal{C}}^E. 
\end{equation}%

Two main challenges of the above optimal quantization problem are:
1)~codebook optimization which is NP hard;~2)~VQ which
is also NP hard with respect to the dimension for a general codebook.
Therefore, the goal of this work is to propose a hierarchical codebook
design which enables VQ with low complexity.

\section{Hierarchical Gradient Quantization }\label{Hierarchical scheme}

In this paper, we propose to quantize the gradient norm $\rho$ with a
scalar codebook $\mathcal{C}_\rho$ with $B_\rho$ bits and the normalized stochastic gradient
$\bff$ with a Grassmannian codebook\footnote{A Grassmannian codebook is
a set of unit norm codewords.} $\mathcal{C}_{\bff}$ with
$B_{\bff}$ bits. This is motivated by the suitability of such a codebook for quantizing 
a vector that contains directional information and the tractability of relevant designs~\cite{dhillon2008constructing}. 

Nevertheless, directly designing the codebook for $\bff$ is impractical due
to its high-dimensionality. 
To further reduce the complexity, we propose to decompose $\bff$ prior to quantization as
follows. Assuming that $\mathsf{Dim} = LM$ for some
integers\footnote{We can apply zero padding if $\mathsf{Dim} \ne LM$.}
$M$ and $L$, we partition the vector $\bff$ into $M$ blocks of
length $L$, i.e., $\bff^T = [\bv^T_1,\ldots,\bv^T_M]$. We call $\bv_i$ the
$i$-th \emph{block gradient}. Also, let us define the \emph{normalized
block gradient} $\bs_i = \frac{\bv_i}{\|\bv_i\|}$ and the
\emph{hinge vector} $\bh = [h_1,\ldots,h_M]^T$ where $h_i = \|\bv_i\|$,
$\forall\,i=1,\ldots,M$. It follows that both the normalized block gradients
and the hinge vector have unit norm and can be quantized with
Grassmannian quantizers. In addition, one can show that the normalized
block gradients are also isotropic, which is formally established in the following lemma.

\begin{lemma} \label{Lemma: Uniformity}
(Uniformity of normalized block gradients).
If $\bff = [f_1,f_2,\cdots,f_{\sf Dim}]^T$ is a uniformly distributed unitary random vector and given $\bv = [f_m,f_{m+1},\cdots,f_{n}]^T$ with $m<n$ an arbitrary block gradient picked from $\bff$, one can have that the normalized block gradient $\bs = \frac{\bv}{\lVert\bv\rVert}$ is uniformly distributed on the Grassmann manifold.
\proof
See Appendix~\ref{Proof: Lemma_uniformity}.
\endproof
\end{lemma}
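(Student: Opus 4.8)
The plan is to exploit the well-known Gaussian representation of the uniform distribution on the sphere. Since $\bff$ is uniformly distributed on the Grassmann manifold, i.e.\ on the unit sphere of $\mathbb R^{\mathsf{Dim}}$, its law is invariant under orthogonal transformations, and hence it is equal in distribution to $\bz/\|\bz\|$ where $\bz = [z_1,\ldots,z_{\mathsf{Dim}}]^T \sim \mathcal N(\b0, \bI_{\mathsf{Dim}})$ is a standard Gaussian vector. Because the normalized block gradient $\bs$ is a deterministic function of $\bff$, its distribution is unchanged if we replace $\bff$ by this representative; I would therefore carry out the whole argument with $\bff = \bz/\|\bz\|$, so that the conclusion reduces to a statement about the Gaussian vector $\bz$.

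First I would express the block gradient in terms of $\bz$. Writing $\bz_{\mathrm{blk}} = [z_m,\ldots,z_n]^T$ for the corresponding sub-block of $\bz$, one has $\bv = \bz_{\mathrm{blk}}/\|\bz\|$, so the common scalar factor $1/\|\bz\|$ is shared by every entry of $\bv$. The key step is that this factor cancels upon normalization: $\bs = \bv/\|\bv\| = \bz_{\mathrm{blk}}/\|\bz_{\mathrm{blk}}\|$, which no longer depends on the global norm $\|\bz\|$ nor on the coordinates outside the block. Next I would invoke the fact that any sub-collection of the i.i.d.\ entries of a standard Gaussian vector is again standard Gaussian, so $\bz_{\mathrm{blk}} \sim \mathcal N(\b0,\bI_{n-m+1})$. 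Consequently $\bs$ is the normalization of a standard Gaussian vector in $\mathbb R^{n-m+1}$, which is precisely the uniform distribution on the unit sphere of that space, i.e.\ uniform on the Grassmann manifold. This closes the chain.

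An equivalent, representation-free route is to verify orthogonal invariance directly: for any $\bQ$ in the orthogonal group of $\mathbb R^{n-m+1}$, embed it as the block-diagonal matrix $\tilde\bQ = \diag(\bI, \bQ, \bI)$ acting on $\mathbb R^{\mathsf{Dim}}$; isotropy of $\bff$ gives $\tilde\bQ\bff \stackrel{d}{=} \bff$, while the block gradient of $\tilde\bQ\bff$ is $\bQ\bv$ and its normalization is $\bQ\bs$, whence $\bQ\bs \stackrel{d}{=} \bs$. Since the uniform measure is the unique orthogonally invariant law on the sphere, $\bs$ is uniform. In either route there is no deep difficulty; the only points requiring care are matters of rigour: justifying that the conclusion depends solely on the law of $\bff$ (so the Gaussian representative may be used without loss of generality), and dispatching the null event $\|\bv\| = 0$, which occurs with probability zero under the continuous law of $\bff$ and on which $\bs$ may be defined arbitrarily.
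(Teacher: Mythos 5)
Your primary argument is exactly the paper's proof: represent $\bff$ as $\bz/\|\bz\|$ with $\bz$ standard Gaussian, observe that the global norm cancels upon block normalization so that $\bs = \bz_{\mathrm{blk}}/\|\bz_{\mathrm{blk}}\|$, and conclude uniformity since a normalized standard Gaussian vector is uniform on the sphere. Your additional remarks (the orthogonal-invariance route and the treatment of the null event $\|\bv\|=0$) go slightly beyond the paper's brief argument but do not change the approach, so the proposal is correct and essentially identical to the paper's own proof.
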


Given the above decomposition and properties, we propose a quantization scheme with the following main ingredients, as shown in
Fig.~\ref{Fig:hierarchical_scheme}. 
\begin{itemize}
  \item For the stochastic gradient norm: $B_\rho$-bit scalar quantizer
    $\mathcal{C}_\rho$;
  \item For the normalized block gradients: $B_\bs$-bit uniform and
    even\footnote{We call $\mathcal{C}$ an even codebook if it can be partitioned as $\mathcal{C} = \mathcal{C}^+ \bigcup
    \mathcal{C}^-$ with $\mathcal{C}^+\bigcap\mathcal{C}^-=\emptyset$
    such that $-\bc\in\mathcal{C}^-$, $\forall\,\bc\in\mathcal{C}^+$.}
    Grassmannian quantizer $\mathcal{C}_\bs$;
\item For the hinge vector: $B_{\bh}$-bit positive\footnote{We call
 $\mathcal{C}$ a positive codebook if all codewords have only
 positive entries.} Grassmannian quantizer $\mathcal{C}_\bh$. 
\end{itemize}
\begin{figure}[t]
  \centering
\includegraphics[width=0.65\textwidth]{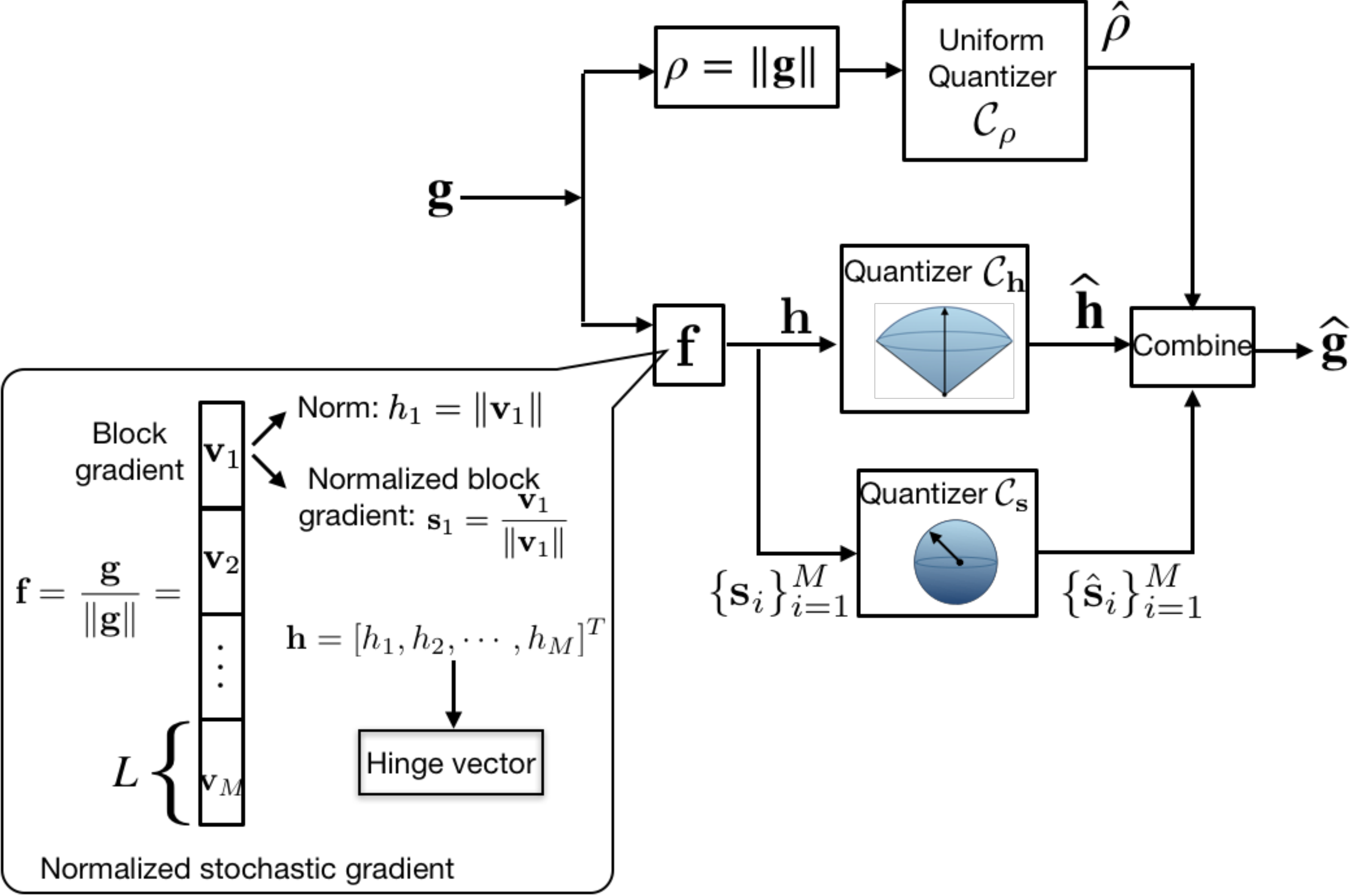} 
     \caption{Hierarchical quantization scheme.}
  \vspace{0mm}
  \label{Fig:hierarchical_scheme}
\end{figure}
In total, we need $B = B_\rho + M B_\bs + B_{\hv}$ bits. The quantized
version of $\bg$ is 
\begin{equation}
  \hat{\bg}^T=\hat{\rho}[\hat{\bff}^T_1,\ldots,\hat{\bff}^T_M] =
  \hat{\rho}[\hat{h}_1\hat{\bs}^T_1,\ldots,\hat{h}_M\hat{\bs}^T_M],
\end{equation}
where $\hat\rho$, $\hat{h}_i$, and $\hat{\bs}_i$ denote the quantized versions of $\rho$, ${h}_i$, and ${\bs}_i, \forall i$. 

For the above quantizers, we focus on quantization functions that minimize the Euclidean
distance~(MSE) between $\bg$ and $\hat\bg$. Let $\bx$ and $\hat{\bx}$ be two unit norm vectors, and let
  $d_c(\bx,\hat\bx) = \sqrt{1-|\hat\bx^T\bx|^2}$ be the
  chordal distance that measures angular deviation between $\bx$ and $\hat{\bx}$. 
 The two following lemmas are straightforward.

\begin{lemma}
  For a positive Grassmannian codebook~$\mathcal{C}$ and a given unitary
  vector $\bx$ with positive entries, if $\hat{\bx}\in\arg\min_{\hat{\bx}\in\mathcal{C}}
  d_c(\bx,\hat\bx)$, then
  $\hat{\bx}\in\arg\min_{\bx\in\mathcal{C}} \|\bx-\hat\bx\|^2$. 
\end{lemma}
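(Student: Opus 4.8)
The plan is to reduce both the chordal-distance minimization and the Euclidean-distance minimization to the very same scalar optimization over the codebook, namely the maximization of an inner product, and then to invoke positivity to show these two reductions coincide. First I would exploit the unit-norm property of $\bx$ and of every codeword. Since $\|\bx\|=\|\hat\bx\|=1$, expanding the squared Euclidean distance gives
\begin{equation}
\|\bx-\hat\bx\|^2 = \|\bx\|^2 - 2\hat\bx^T\bx + \|\hat\bx\|^2 = 2\l(1 - \hat\bx^T\bx\r),
\end{equation}
so that $\arg\min_{\hat\bx\in\mathcal{C}}\|\bx-\hat\bx\|^2 = \arg\max_{\hat\bx\in\mathcal{C}}\hat\bx^T\bx$. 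In parallel, from the definition $d_c(\bx,\hat\bx)=\sqrt{1-|\hat\bx^T\bx|^2}$, minimizing the chordal distance (equivalently its square) over $\mathcal{C}$ is the same as $\arg\max_{\hat\bx\in\mathcal{C}}|\hat\bx^T\bx|$. Thus the two problems differ only in that the Euclidean criterion maximizes the signed inner product $\hat\bx^T\bx$, whereas the chordal criterion maximizes its magnitude $|\hat\bx^T\bx|$.

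The crux of the argument is then to remove this absolute value using the positivity hypotheses. By assumption $\bx$ has strictly positive entries and $\mathcal{C}$ is a positive codebook, so every codeword $\hat\bx\in\mathcal{C}$ also has strictly positive entries. Consequently $\hat\bx^T\bx=\sum_i \hat{x}_i x_i>0$ for all $\hat\bx\in\mathcal{C}$, whence $|\hat\bx^T\bx|=\hat\bx^T\bx$ uniformly over the codebook. Substituting this identity, the two maximization problems coincide exactly, i.e. $\arg\max_{\hat\bx\in\mathcal{C}}|\hat\bx^T\bx|=\arg\max_{\hat\bx\in\mathcal{C}}\hat\bx^T\bx$, and therefore any minimizer of $d_c(\bx,\cdot)$ is also a minimizer of $\|\bx-\cdot\|^2$. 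This establishes the claimed containment (in fact an equality of the two $\arg\min$ sets).

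Because the statement is essentially an algebraic equivalence, I do not anticipate a genuine obstacle; the only point requiring care is the positivity step, which is the single nontrivial observation underpinning the lemma. Without both $\bx$ and the codewords lying in the positive orthant, the inner product could be negative, in which case the magnitude $|\hat\bx^T\bx|$ in $d_c$ might select a different (possibly antipodal) codeword and break the equivalence. I would therefore state $\hat\bx^T\bx>0$ explicitly and note that it is precisely the positivity of $\bx$ together with the positive-codebook assumption that licenses dropping the absolute value.
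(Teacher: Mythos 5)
Your proof is correct and is exactly the argument the paper has in mind: the paper states this lemma without proof (calling it straightforward), and the intended reasoning is precisely your reduction of both criteria to inner-product maximization, using $\|\bx-\hat\bx\|^2 = 2\l(1-\hat\bx^T\bx\r)$ for unit-norm vectors and the positivity of $\hat\bx^T\bx$ to drop the absolute value in $d_c$. Your observation that positivity is the single load-bearing hypothesis (and that the equivalence fails without it, which is why the even-codebook case needs the sign-flip argument) matches the paper's surrounding discussion.
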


\begin{lemma}
  For an even Grassmannian codebook~$\mathcal{C}$ and a given unit norm
  vector $\bx$, if $\hat{\bx}\in\arg\min_{\hat{\bx}\in\mathcal{C}^+}
  d_c(\bx,\hat\bx)$, then
  either $\hat{\bx}$ or $-\hat{\bx}$ belongs to
  $\arg\min_{\bx\in\mathcal{C}} \|\bx-\hat\bx\|^2$. 
\end{lemma}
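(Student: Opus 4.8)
The plan is to reduce both the chordal and the Euclidean criteria to a single linear functional, the inner product $\bx^T\bc$, and then to exploit the even structure of $\mathcal{C}$ to relate maximization over $\mathcal{C}^+$ to maximization over all of $\mathcal{C}$. First I would rewrite the Euclidean objective: since $\bx$ and every codeword have unit norm, $\|\bx-\hat\bx\|^2 = 2 - 2\bx^T\hat\bx = 2(1-\bx^T\hat\bx)$, so minimizing $\|\bx-\hat\bx\|^2$ over $\mathcal{C}$ is equivalent to maximizing $\bx^T\hat\bx$ over $\mathcal{C}$. In the same spirit, $d_c^2(\bx,\hat\bx) = 1 - |\bx^T\hat\bx|^2$, so the hypothesis that $\hat\bx$ minimizes $d_c$ over $\mathcal{C}^+$ is exactly the statement that $\hat\bx \in \arg\max_{\bc\in\mathcal{C}^+}|\bx^T\bc|$.

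The second step is to use the defining property of an even codebook, namely $\mathcal{C} = \mathcal{C}^+ \cup \mathcal{C}^-$ with $\mathcal{C}^- = -\mathcal{C}^+$. I would pair each $\bc\in\mathcal{C}^+$ with its antipode $-\bc\in\mathcal{C}^-$ and note that, because $\bx^T(-\bc) = -\bx^T\bc$, the larger of the two signed inner products over the pair equals $|\bx^T\bc|$. Taking the maximum over all such pairs then yields $\max_{\bc\in\mathcal{C}}\bx^T\bc = \max_{\bc\in\mathcal{C}^+}|\bx^T\bc|$, and the right-hand maximizer is precisely the chordal minimizer $\hat\bx$ identified above.

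Finally I would lift this maximizer back to $\mathcal{C}$. If $\bx^T\hat\bx \ge 0$ then $\hat\bx$ itself attains $\max_{\bc\in\mathcal{C}}\bx^T\bc$; if instead $\bx^T\hat\bx < 0$ then $-\hat\bx\in\mathcal{C}^-$ attains it, since $\bx^T(-\hat\bx) = |\bx^T\hat\bx|$. In either case one of $\hat\bx,\,-\hat\bx$ maximizes $\bx^T\bc$ over $\mathcal{C}$, which by the first step is the same as minimizing $\|\bx-\bc\|^2$ over $\mathcal{C}$, establishing the claim. The argument is sign-bookkeeping rather than hard analysis; the only point that truly requires care is that the chordal distance is blind to the sign of the inner product (it sees only $|\bx^T\bc|$), so a single chordal-optimal codeword in $\mathcal{C}^+$ may correspond to either itself or its antipode once we pass to the signed Euclidean criterion — correctly tracking that sign is the crux of the proof.
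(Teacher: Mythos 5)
Your proof is correct and follows essentially the same route the paper takes: the paper declares this lemma ``straightforward'' and its surrounding text sketches exactly your argument, namely that minimizing chordal distance over $\mathcal{C}^+$ maximizes $|\bx^T\bc|$, and flipping the sign of $\hat\bx$ when $\bx^T\hat\bx<0$ recovers the Euclidean minimizer over $\mathcal{C}=\mathcal{C}^+\bigcup\mathcal{C}^-$ since $\|\bx-\bc\|^2=2(1-\bx^T\bc)$ for unit norm vectors. Your explicit sign-bookkeeping via $\max_{\bc\in\mathcal{C}}\bx^T\bc=\max_{\bc\in\mathcal{C}^+}|\bx^T\bc|$ is a clean formalization of what the paper leaves implicit.
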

Hence, to find a codeword in $\mathcal{C}$ with the shortest Euclidean
distance to $\bx$, we first find a codeword $\hat{\bx}$ in
$\mathcal{C}^+$ with the shortest
chordal distance to $\bx$. If the inner product between the pair is negative, then we flip the
sign of the codeword, which is a codeword in $\mathcal{C}^-$ and, therefore, is still inside $\mathcal{C}$. 
In addition, if $\hat{\bx}$ is the quantized version of $\bx$ in
an even codebook, then 
\begin{equation}
  \frac{1}{2}\|\bx - \hat{\bx}\|^2 = 1 - \sqrt{1-d_c^2(\bx,\hat\bx)}.
\end{equation}%

\newcommand{\Eg}{E_g}
\newcommand{\Dchor}[1]{D^{\mathrm{C}}_{#1}}
\newcommand{\Deucl}[1]{D^{\mathrm{E}}_{#1}}
\newcommand{\Cs}{\mathcal{C}_\bs}
\newcommand{\Crho}{\mathcal{C}_\rho}
\newcommand{\Ch}{\mathcal{C}_\bh}
\newcommand{\Cf}{\mathcal{C}_\bff}

\newcommand{\href}{\bh_{\text{ref}}}
\newcommand{\Bhref}{\mathcal{B}_{\href}}

\begin{proposition}
The mean square error of the proposed quantizer is 
\begin{equation}
  \mathbb{E}\left\{ \|\bg - \hat\bg \|^2 \right\} =
  \Deucl{\Crho} + \Eg \Deucl{\Cf},
\end{equation}%
where $\Eg = \mathbb{E}\left\{ \| \bg \|^2 \right\}$, and 
\begin{equation}
  \Deucl{\Cf} = 
  \Deucl{\Cs} + \Deucl{\Ch} - \frac{1}{2}
  \Deucl{\Cs} \Deucl{\Ch},
\end{equation}%
with $\Deucl{\Cs} =
\mathbb{E}\left\{ \| \bs - \Quan_{\mathcal{C}_{\bs}}^E(\bs) \|^2 \right\}$, $\Deucl{\Ch} =
\mathbb{E}\left\{ \| \bh - \Quan_{\mathcal{C}_{\bh}}^E(\bh) \|^2 \right\}$.
\end{proposition}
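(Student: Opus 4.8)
The plan is to establish the two displayed identities separately, using Assumption~1 (isotropy of $\bg$) to decouple the norm quantizer from the direction quantizer. Writing $\bg=\rho\bff$ and $\hat\bg=\hat\rho\hat\bff$ with $\|\bff\|=\|\hat\bff\|=1$, I would first split the error as $\bg-\hat\bg=\rho(\bff-\hat\bff)+(\rho-\hat\rho)\hat\bff$ and expand, using $\|\hat\bff\|=1$, to obtain
\begin{equation}
\|\bg-\hat\bg\|^2=\rho^2\|\bff-\hat\bff\|^2+(\rho-\hat\rho)^2+2\rho(\rho-\hat\rho)\big(\bff^T\hat\bff-1\big).
\end{equation}
Under isotropy $\rho$ and $\bff$ are independent, and $\hat\rho$ (resp.\ $\hat\bff$) is a deterministic function of $\rho$ (resp.\ $\bff$); hence the first term has expectation $\Eg\,\Deucl{\Cf}$ and the second equals $\Deucl{\Crho}$, with $\Deucl{\Cf}=\mathbb{E}\{\|\bff-\hat\bff\|^2\}$.

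The heart of the argument is the decomposition of $\Deucl{\Cf}$. Since $\bff$ and $\hat\bff$ are unit norm, I would reduce the distortion to an inner product,
\begin{equation}
\|\bff-\hat\bff\|^2=2-2\,\bff^T\hat\bff=2-2\sum_{i=1}^{M}h_i\hat h_i\,\bs_i^T\hat\bs_i,
\end{equation}
where the last step inserts the block structure $\bv_i=h_i\bs_i$ and its reconstruction $\hat h_i\hat\bs_i$ (the even/positive codebook conventions of the two preceding lemmas guarantee $\hat h_i\ge 0$ and $\bs_i^T\hat\bs_i\ge 0$, so these inner-product identities apply). The key is to factor the expectation of the sum. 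Representing a uniform $\bff$ as $\bz/\|\bz\|$ with $\bz$ standard Gaussian, each block direction is $\bs_i=\bz_i/\|\bz_i\|$; by Lemma~\ref{Lemma: Uniformity} the $\{\bs_i\}$ are mutually independent and uniform on the Grassmann manifold, and they are jointly independent of the block norms, hence of $\bh$ and $\hat\bh$. Therefore $\mathbb{E}\{\bs_i^T\hat\bs_i\}$ is a common constant $1-\tfrac12\Deucl{\Cs}$, and
\begin{equation}
\mathbb{E}\Big\{\sum_{i=1}^{M}h_i\hat h_i\,\bs_i^T\hat\bs_i\Big\}=\Big(1-\tfrac12\Deucl{\Cs}\Big)\,\mathbb{E}\{\bh^T\hat\bh\}=\Big(1-\tfrac12\Deucl{\Cs}\Big)\Big(1-\tfrac12\Deucl{\Ch}\Big),
\end{equation}
using $\mathbb{E}\{\bh^T\hat\bh\}=1-\tfrac12\Deucl{\Ch}$. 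Substituting into the previous display yields $\Deucl{\Cf}=\Deucl{\Cs}+\Deucl{\Ch}-\tfrac12\Deucl{\Cs}\Deucl{\Ch}$, the second claim.

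The step I expect to be the main obstacle is the cross term $2\rho(\rho-\hat\rho)(\bff^T\hat\bff-1)$ in the first display. By the same independence its expectation factors as $2\,\mathbb{E}\{\rho(\rho-\hat\rho)\}\cdot(-\tfrac12\Deucl{\Cf})$, so the clean identity $\mathbb{E}\{\|\bg-\hat\bg\|^2\}=\Deucl{\Crho}+\Eg\Deucl{\Cf}$ holds exactly only when $\mathbb{E}\{\rho(\rho-\hat\rho)\}=0$, equivalently $\mathbb{E}\{\rho\hat\rho\}=\Eg$. I would secure this by taking the scalar norm quantizer to be conditionally unbiased, $\mathbb{E}\{\hat\rho\mid\rho\}=\rho$ (e.g.\ via subtractive dithering), or by passing to the high-resolution regime $\hat\rho\to\rho$ in which the cross term is negligible; either way it drops out and the two additive contributions remain. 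The remaining work is bookkeeping: verifying the independence and identical-distribution claims that drive the factorization, which follow from Assumption~1 together with Lemma~\ref{Lemma: Uniformity} and are the real engine of the result.
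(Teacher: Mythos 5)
The paper states this proposition without any proof, so there is no in-paper argument to compare yours against; judged on its own, your derivation is correct and is essentially the argument the construction invites. The engine of your proof---representing $\bff=\bz/\lVert\bz\rVert$ with Gaussian $\bz$, so that $\bs_i=\bz_i/\lVert\bz_i\rVert$ and $h_i=\lVert\bz_i\rVert/\lVert\bz\rVert$, whence the block directions are mutually independent, uniform, and independent of $(\bh,\hat{\bh})$---is exactly what is needed to factor $\mathbb{E}\{\bff^T\hat{\bff}\}=\bigl(1-\tfrac{1}{2}\Deucl{\Cs}\bigr)\bigl(1-\tfrac{1}{2}\Deucl{\Ch}\bigr)$ and obtain the product formula for $\Deucl{\Cf}$ exactly; note only that the mutual independence across blocks comes from this Gaussian representation, not from Lemma~\ref{Lemma: Uniformity} itself, which is a per-block marginal statement. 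Your handling of the cross term is also the right call, and it in fact exposes an imprecision in the proposition as stated: using the independence of $(\rho,\hat{\rho})$ and $(\bff,\hat{\bff})$, the exact identity is $\mathbb{E}\{\lVert\bg-\hat{\bg}\rVert^2\}=\Deucl{\Crho}+\mathbb{E}\{\rho\hat{\rho}\}\,\Deucl{\Cf}$, so replacing $\mathbb{E}\{\rho\hat{\rho}\}$ by $\Eg=\mathbb{E}\{\rho^2\}$ is exact only when $\mathbb{E}\{\rho(\rho-\hat{\rho})\}=0$, i.e., for a conditionally unbiased (e.g., dithered) norm quantizer, as you observe. For a deterministic quantizer satisfying the centroid condition one instead gets $\mathbb{E}\{\rho(\rho-\hat{\rho})\}=\Deucl{\Crho}$, and the exact expression becomes $\Deucl{\Crho}+\Eg\Deucl{\Cf}-\Deucl{\Crho}\Deucl{\Cf}$; the discrepancy is a product of two distortions, hence second order, and is in any case absorbed by the asymptotic upper bound the paper actually uses in Corollary~\ref{cor:MSE}. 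In short, your proof is rigorous where it claims to be, and the caveat you flag is a genuine one that the paper's unproved statement glosses over.
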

\begin{corollary}\label{cor:MSE}
  When $\max\{\Deucl{\Cs}, \Deucl{\Ch}\}$ is small,
  we have
  \begin{equation}
    \mathbb{E}\left\{ \|\bg - \hat\bg \|^2 \right\}
    \lesssim
    \Deucl{\Crho} + \Eg \left( \Deucl{\Cs} +
  \Deucl{\Ch}  \right), \label{eq:MSE}
\end{equation}%
where $\lesssim$ means the upper bound is asymptotically tight. 
\end{corollary}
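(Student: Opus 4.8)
The plan is to obtain the corollary as an immediate consequence of the Proposition, treating the bilinear term $\frac{1}{2}\Deucl{\Cs}\Deucl{\Ch}$ as a higher-order correction that can be discarded in the regime $\max\{\Deucl{\Cs},\Deucl{\Ch}\}\to 0$. Since the identity supplied by the Proposition is already exact, essentially all of the work has been done upstream, and the only remaining tasks are to control this single cross term and to make precise the sense in which the resulting bound is tight.

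First I would substitute the expression for $\Deucl{\Cf}$ into the statement of the Proposition, writing the exact mean square error as
\[
  \mathbb{E}\left\{ \|\bg - \hat\bg \|^2 \right\}
  = \Deucl{\Crho} + \Eg\left( \Deucl{\Cs} + \Deucl{\Ch} \right)
  - \frac{\Eg}{2}\,\Deucl{\Cs}\,\Deucl{\Ch}.
\]
Because each distortion term is a nonnegative expectation of a squared norm and $\Eg = \mathbb{E}\{\|\bg\|^2\} \ge 0$, the final summand is nonpositive. Discarding it therefore yields the stated bound
\[
  \mathbb{E}\left\{ \|\bg - \hat\bg \|^2 \right\}
  \le \Deucl{\Crho} + \Eg\left( \Deucl{\Cs} + \Deucl{\Ch} \right)
\]
without any further assumption, which establishes $\lesssim$ as a genuine inequality rather than merely an asymptotic one.

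To justify that the bound is asymptotically tight, I would quantify the neglected term. Setting $\delta = \max\{\Deucl{\Cs}, \Deucl{\Ch}\}$, the cross term obeys $\frac{1}{2}\Deucl{\Cs}\Deucl{\Ch} \le \frac{1}{2}\delta^2$, whereas the retained first-order contribution satisfies $\Deucl{\Cs} + \Deucl{\Ch} \ge \delta$. Hence the gap between the upper bound and the exact error is second order in the distortions while the dominant retained terms are first order, so the relative error scales like $\delta$ and vanishes as $\delta \to 0$. This is precisely the meaning attached to $\lesssim$ in the corollary statement.

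The main obstacle here is not computational but one of precision: the Proposition's identity carries all the analytical weight, and the only care required is to state the small-distortion regime explicitly (so that $\delta^2 \ll \delta$) and to verify the sign of the discarded term. I do not anticipate any difficulty beyond the elementary bound on the single bilinear product, and no additional structural properties of the Grassmannian quantizers are needed for this step.
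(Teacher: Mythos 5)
Your proposal is correct and follows essentially the same route the paper intends: the corollary is an immediate consequence of the Proposition's exact identity, obtained by discarding the nonpositive cross term $-\frac{\Eg}{2}\Deucl{\Cs}\Deucl{\Ch}$, which is second order in $\max\{\Deucl{\Cs},\Deucl{\Ch}\}$ while the retained terms are first order, giving asymptotic tightness. Your added quantification of the relative gap (scaling like $\delta$) is a clean way of making precise the meaning of $\lesssim$, but it is the same argument.
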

There are two sub-problems for the codebook design problem. First, 
 for given bit allocation $(B_\rho, B_\bs, B_\bh)$, we need to jointly
 design the codebooks $(\Crho, \Cs,
 \Ch)$. From Corollary~\ref{cor:MSE}, it is 
asymptotically optimal, in the sense of scaling law, to design the codebooks $\Crho$,
$\Cs$, $\Ch$ separately. Second, we should
optimize the bit allocation such that the MSE~\eqref{eq:MSE} is
minimized. Given that it is hard to obtain tractable MSE expression as a
function of the codebook size, we investigate the more tractable expected squared
chordal distance instead. Such choice is justified by the closeness
between the two measures in the following sense
\begin{align}
  d_c^2(\bx,\hat\bx) &\le \|\bx - \hat{\bx}\|^2 \le 2d_c^2(\bx,\hat\bx). 
\end{align}%
Mathematically, we have the following optimization problem
\begin{align}
  \min_{B_\rho, B_\bs, B_\bh} &\left\{ 
   \min_{\Crho}  \Deucl{\Crho} + \Eg \left(
   \min_{\Cs} \Dchor{\Cs} +
  \min_{\Ch} \Dchor{\Ch}  \right) 
  \right\},\label{Eq: Distortion}\\
  &\text{s.t.}~B_\rho + M B_\bs + B_{\hv} \le B, 
\end{align}%
where $\Dchor{\mathcal{C}} = \mathbb{E}\left\{
d_c^2(\bx,\Quan_{\mathcal{C}}^C(\hat{\bx})) \right\}$ with the
superscript `C' for `{chordal}' distance.

\subsubsection{The design of Grassmannian codebook $\Cs$}\label{Block Gradient Quantization}
Recall that the codebook $\Cs$ is an even codebook. Hence, it is enough
to first construct a codebook $\Cs^+$ of size $2^{B_\bs-1}$, then construct $\Cs^- = \{
\bc:\,-\bc \in \Cs^+ \}$, and finally let $\Cs = \Cs^+ \bigcup \Cs^-$. The
optimal construction of $\Cs^+$ for isotropic sources is known as  
\emph{Grassmannian line packing}~\cite{love2003grassmannian},
formulated as 
\begin{align}
  (\text{codebook design for $\Cs^+$})\quad \max_{\Cs^+} \ \min\limits_{\hat{\bs}\ne\hat{\bs}'\in\Cs^+}
d_c\l(\hat{\bs},\hat{\bs}'\r). 
\end{align}%
\subsubsection{The design of Grassmannian codebook $\Ch$}\label{Hinge Vector Quantization}
An optimal codebook $\Ch$ should satisfy
\begin{equation}\label{SimpObj}
  \Ch \in \arg\min\limits_{\Ch}\ \mathbb E[d^2_c(\bh,\Quan_{\Ch}^C(\bh))]. 
\end{equation}%
Since the hinge vector is not isotropic, uniform quantization is not
optimal in general. A practical suboptimal solution is the \emph{Lloyd algorithm}~\cite{lau2004design}
on the Grassmann manifold that can be implemented iteratively.

\subsubsection{The design of scalar codebook $\Crho$} Similarly, 
an optimal codebook $\Crho$ should satisfy
\begin{equation}
  \Crho \in \arg\min\limits_{\Crho}\ \mathbb
  E[\|\rho-\Quan_{\Crho}^E(\rho)\|^2]. 
\end{equation}%

For simplicity, we adopt a uniform quantizer for $\rho = \lVert\bg\rVert$. So far, we have developed the hierarchical quantization scheme for stochastic gradients using three codebooks (see Fig.~\ref{Fig:hierarchical_scheme}). 

\section{Distortion Analysis}\label{Distortion_analyses}

In this section, the distortion from quantizing the stochastic gradients under the proposed hierarchical scheme will be analyzed, which involves the derivations of $ \Dchor{\Cs}$, $ \Dchor{\Ch}$ and $\Deucl{\Crho}$ in~\eqref{Eq: Distortion}, respectively.

\subsubsection{Distortion analysis on the normalized block gradient} According to~\cite{conway2013sphere,dai2008quantization}, the codebook designed by line packing is asymptotically optimal, and thus the resulting distortion for quantizing uniformly distributed unitary random vectors is asymptotically identical to that using the random codebook. Mathematically, the average distortion for the normalized block gradient is characterized as follows.

\begin{lemma}\label{Lemma:Block_Q-error}
(~\cite{dai2008quantization}, Theorem 2). Let $\Cs$ be a codebook
designed by line packing, with resolution $B_{\bs}-1$ and dimensionality
$L$. The average distortion, denoted as $ \Dchor{\Cs}$, incurred by quantizing the uniformly distributed unitary random vector under the codebook $\Cs$ can be bounded as
\begin{equation}\label{Eq:Block_Upper}
\frac{L-1}{L+1}2^{-\frac{2(B_{\bs}-1)}{L-1}} + o(1) \leq  \Dchor{\Cs}
\leq 2^{-\frac{2(B_{\bs}-1)}{L-1}} + o(1),
\end{equation} 
where $o(1)$ indicates a vanishing function of $L$ as $L \to \infty$.
\end{lemma}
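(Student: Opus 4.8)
The plan is to reduce the claim to a quantization-cell volume computation on the real Grassmann manifold and then invoke the asymptotic equivalence between line-packing and random codebooks. By Lemma~\ref{Lemma: Uniformity}, the normalized block gradient $\bs$ is uniformly distributed on $G(L,1)$, the manifold of lines in $\mathbb{R}^L$ (real projective space $\mathbb{RP}^{L-1}$), whose real dimension is $L-1$. Since $\Cs$ is even, the effective codebook size on $G(L,1)$ is $N = 2^{B_\bs-1}$, and the distortion measure is the squared chordal distance $d_c^2(\bs,\hat\bs) = 1 - |\hat\bs^T\bs|^2 = \sin^2\theta$, with $\theta$ the angle between $\bs$ and $\hat\bs$. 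First I would set up the per-codeword Voronoi cells under $d_c$ and write $\Dchor{\Cs} = \mathbb{E}\{\min_{\hat\bs\in\Cs}d_c^2(\bs,\hat\bs)\}$ as the integral of $d_c^2$ over a typical cell against the uniform measure.

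The second step is the ball-volume estimate. I would use (or cite from the sphere-packing literature) that a metric ball of chordal radius $\delta$ on $G(L,1)$ has normalized volume $V(\delta) = c_L\,\delta^{L-1}\bigl(1+o(1)\bigr)$ as $\delta\to 0$, reflecting the $(L-1)$-dimensional tangent geometry (for small $\theta$, $d_c\approx\theta$). The asymptotic optimality of line packing, established in~\cite{conway2013sphere,dai2008quantization}, guarantees that for large $L$ the $N$ Voronoi cells are asymptotically congruent and of equal measure $1/N$; solving $V(\delta)\approx 1/N$ gives the effective cell radius $\delta^2 \approx N^{-2/(L-1)} = 2^{-\frac{2(B_\bs-1)}{L-1}}$.

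The third step is the distortion averaging that produces the two bounds. Modeling a cell as a chordal ball of radius $\delta$, the radial surface density is proportional to $r^{L-2}$, so the conditional average of $d_c^2 = r^2$ over the ball is
\begin{equation}
\frac{\int_0^\delta r^2\,r^{L-2}\,dr}{\int_0^\delta r^{L-2}\,dr} = \frac{L-1}{L+1}\,\delta^2.
\end{equation}
Substituting the cell radius yields the lower bound $\frac{L-1}{L+1}2^{-\frac{2(B_\bs-1)}{L-1}}$, attained up to $o(1)$ by the ideal equal-ball partition. The upper bound $2^{-\frac{2(B_\bs-1)}{L-1}}$ follows either from relaxing $\frac{L-1}{L+1}\le 1$ or, more directly, from bounding every point's distortion by the covering radius of the packing; the gap between the two reflects the difference between covering and packing radii, which closes as $L\to\infty$. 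All deviations of the true cells from ideal balls, together with the sub-leading terms in $V(\delta)$, are collected into the $o(1)$ terms.

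The main obstacle is the rigorous justification of the equal-measure, ball-like cell structure, namely that line packing leaves no asymptotically oversized or empty Voronoi cells and that boundary effects are negligible. This is precisely the technical core of~\cite{dai2008quantization}, which establishes it via isoperimetric-type volume estimates on the Grassmann manifold and a comparison with the random-codebook ensemble; since our uniform-source assumption matches its hypotheses exactly, I would invoke that result (Theorem~2 therein) rather than reprove it.
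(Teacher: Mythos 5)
Note first that the paper contains no proof of this lemma: it is imported verbatim from \cite{dai2008quantization} (Theorem 2), and the only in-paper justification is the sentence preceding it, which asserts that line-packing codebooks are asymptotically optimal and have distortion asymptotically identical to that of a random codebook. So your proposal must be judged against the argument in that reference. Your lower-bound sketch is essentially that argument and is sound: for any $N$-cell quantizer, a rearrangement/Jensen step reduces the distortion to that of an ideal partition into equal-volume metric balls; the radial density $\propto r^{L-2}$ gives the conditional mean $\frac{L-1}{L+1}\delta^2$; and solving $\mu(\mathcal{B}(\delta)) = c_L\delta^{L-1}(1+o(1)) = 1/N$ with $N = 2^{B_{\bs}-1}$ yields $\frac{L-1}{L+1}2^{-\frac{2(B_{\bs}-1)}{L-1}}$, the constant $c_L^{-2/(L-1)} = 1+o(1)$ being absorbed into the $o(1)$.

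The genuine gap is in your upper bound: both routes you offer fail. Relaxing $\frac{L-1}{L+1}\le 1$ is a non sequitur --- weakening the coefficient of a lower bound produces another (weaker) lower bound, never an upper bound on $\Dchor{\Cs}$; an achievability argument is indispensable. The covering-radius route does not give the stated constant either: even granting that the packing is saturated, so that its covering radius is at most its minimum distance $d_{\min}$, the volume (Hamming) bound only yields $d_{\min}\le 2(c_L N)^{-1/(L-1)}$, hence a worst-case distortion of $4\cdot 2^{-\frac{2(B_{\bs}-1)}{L-1}}(1+o(1))$. That factor $4$ is multiplicative and cannot be hidden in the additive $o(1)$: the lemma is non-vacuous precisely when $B_{\bs}$ grows linearly with $L$, so that $2^{-2(B_{\bs}-1)/(L-1)}$ stays bounded away from $0$ and $1$ (for fixed $B_{\bs}$ both sides of \eqref{Eq:Block_Upper} tend to $1$ and the statement is trivial), and in that regime the discrepancy $3\cdot 2^{-2(B_{\bs}-1)/(L-1)}$ is bounded away from zero. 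The argument the reference actually uses is random-coding achievability: for $N$ i.i.d.\ uniform codewords, $\mathbb{E}[\min_i d_c^2] = \int_0^1 \bigl(1-\mu(\mathcal{B}(\sqrt{t}))\bigr)^N\,dt \approx \Gamma\bigl(1+\tfrac{2}{L-1}\bigr)(c_L N)^{-\frac{2}{L-1}} = 2^{-\frac{2(B_{\bs}-1)}{L-1}}(1+o(1))$, since $\Gamma(1+\tfrac{2}{L-1})\to 1$ and $c_L^{-2/(L-1)}\to 1$; the asymptotic optimality of line packing (the claim the paper cites from \cite{conway2013sphere,dai2008quantization}) then transfers this guarantee to $\Cs$. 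Relatedly, your assertion that the packing's Voronoi cells are asymptotically congruent with equal measure $1/N$ is neither established in the reference nor needed: the lower bound requires no cell structure at all, and the upper bound comes from random codes, not from the geometry of the packing's cells.
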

One can observe from the above lemma that $\Dchor{\Cs}$ decreases exponentially as the codebook resolution $B_{\bs}$ increases. Moreover, fix $B_{\bs}$, as the length of the block increases, the quantization performance degrades accordingly in that pairwise distances between codewords enlarges according to the well-known results in line packing~\cite{conway2013sphere}.

\subsubsection{Geometric properties of the hinge vector}
To facilitate the derivation of $\Dchor{\Ch}$, the geometric properties of the hinge vector will be analyzed in this sub-section.

To begin with, we investigate the statistic distribution of the hinge vector by introducing the following lemma.
\begin{lemma}
If $X\sim\mathcal X^2\l(a\r)$ and $Y\sim\mathcal X^2\l(b\r)$ are
independent Chi-squared random variables, then $\frac{X}{X+Y}$ follows
the $\text{Beta}\l(\frac{a}{2},\frac{b}{2}\r)$ distribution. 
\end{lemma}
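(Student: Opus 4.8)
The plan is to prove this via the standard change-of-variables technique, exploiting the fact that a chi-squared random variable is a special case of the Gamma family. Recall that $X\sim\mathcal X^2(a)$ has density $f_X(x) = \frac{1}{2^{a/2}\Gamma(a/2)}x^{a/2-1}e^{-x/2}$ for $x>0$, and likewise for $Y$. Since $X$ and $Y$ are independent, their joint density is the product $f_{X,Y}(x,y) = \frac{1}{2^{(a+b)/2}\Gamma(a/2)\Gamma(b/2)}x^{a/2-1}y^{b/2-1}e^{-(x+y)/2}$ on the positive quadrant.

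First I would introduce the auxiliary transformation $U = \frac{X}{X+Y}$ and $S = X+Y$, with inverse $X = US$ and $Y = S(1-U)$, mapping the positive quadrant onto $(0,1)\times(0,\infty)$. The Jacobian of the inverse map is $\left|\frac{\partial(x,y)}{\partial(u,s)}\right| = s$, which is the only slightly mechanical step. Substituting into the joint density and collecting powers yields $f_{U,S}(u,s) = \frac{1}{2^{(a+b)/2}\Gamma(a/2)\Gamma(b/2)}u^{a/2-1}(1-u)^{b/2-1}s^{(a+b)/2-1}e^{-s/2}$.

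Next I would observe that this density factors as a product of a function of $u$ alone and a function of $s$ alone, so that $U$ and $S$ are in fact independent and the $u$-dependent factor already displays the Beta shape. To extract the marginal law of $U$, I would integrate out $s$ over $(0,\infty)$ using the Gamma integral $\int_0^\infty s^{(a+b)/2-1}e^{-s/2}\,ds = 2^{(a+b)/2}\Gamma\!\left(\frac{a+b}{2}\right)$. The factors of $2^{(a+b)/2}$ cancel, leaving $f_U(u) = \frac{\Gamma((a+b)/2)}{\Gamma(a/2)\Gamma(b/2)}u^{a/2-1}(1-u)^{b/2-1}$ on $(0,1)$, which is exactly the $\text{Beta}(\frac{a}{2},\frac{b}{2})$ density.

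There is no genuine obstacle here, as the result is classical and the argument is entirely routine. If anything, the only point requiring care is the bookkeeping of the normalizing constants: one must verify that the powers of two and the Gamma factors cancel precisely, so that the surviving $u$-factor is normalized to the exact Beta density rather than merely proportional to it. An alternative, equally short route would be to cite the general Gamma–Beta relationship and simply note that $\mathcal X^2(a) = \text{Gamma}(\frac{a}{2},2)$ with a common scale parameter, but I prefer the self-contained Jacobian computation above for clarity.
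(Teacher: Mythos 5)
Your proof is correct and complete: the densities are right, the inverse map $x=us$, $y=s(1-u)$ and its Jacobian $s$ are computed correctly, the exponent bookkeeping gives $u^{a/2-1}(1-u)^{b/2-1}s^{(a+b)/2-1}e^{-s/2}$, and the Gamma integral $\int_0^\infty s^{(a+b)/2-1}e^{-s/2}\,ds = 2^{(a+b)/2}\Gamma\bigl(\tfrac{a+b}{2}\bigr)$ makes the constants cancel to the exact $\text{Beta}\bigl(\tfrac{a}{2},\tfrac{b}{2}\bigr)$ density. Note, however, that the paper itself supplies no proof of this lemma at all: it is stated as a classical fact (the Gamma--Beta relationship, with $\mathcal{X}^2(a)=\text{Gamma}(\tfrac{a}{2},2)$ sharing a common scale) and immediately applied to conclude that $h_i^2 = Z_1/(Z_1+Z_2)\sim\text{Beta}\bigl(\tfrac{L}{2},\tfrac{\mathsf{Dim}-L}{2}\bigr)$. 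So your self-contained change-of-variables argument is strictly more than what the paper offers; it is essentially the standard textbook derivation, and the alternative you mention at the end (citing the Gamma--Beta relationship directly) is precisely the route the paper takes implicitly. A small bonus of your computation is that it also establishes the independence of $U=X/(X+Y)$ and $S=X+Y$, a stronger fact than the lemma claims, though the paper does not need it.
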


Since the normalized stochastic gradient $\bff$ is isotropic, thereby,
it can be further generated as $\bff = \frac{\bx}{\lVert\bx\rVert}$, where
the elements of $\bx$ are i.i.d. Gaussian random variables with zero
mean and unit variance. It follows that the square of each element of
the hinge vector $\bh = [h_1,h_2,\cdots,h_M]^T$ are beta distributed,
i.e.,  
\begin{equation}\label{Hinge_distribution}
 h^2_i = \frac{Z_1}{Z_1+Z_2} \sim\text{Beta}\l(\frac{L}{2},\frac{{\sf Dim}-L}{2}\r), \quad\forall i\in[1,M],
\end{equation}
where $Z_1\sim\mathcal X^2\l(L\r)$ and $Z_2\sim\mathcal X^2\l({\sf Dim}-L\r)$ are independent.

With this distribution at hand, the geometric properties of the hinge vector will be analyzed. 
 One specific result is characterized as below. 
\begin{proposition}[Convergence of hinge vector]\label{RefVector}\label{Proposition:hyper-sperical_cap}
For sufficiently large $L$, $\bh$ converges to a constant vector
$\bh_{\sf ref}= \frac{1}{\sqrt{M}} \pmb{1}_{M\times1}$. 
In particular, given $r = \sqrt{\frac{2}{1+ 2L^{\frac{1}{4}}}}$,
we have 
\begin{equation}
\emph{\text{Pr}}\l(d_c(\bh_{\sf ref},\bh)>r\r)< L^{-\frac{1}{2}} +
 O(L^{-\frac{3}{2}}), \label{eq:tmp821}
\end{equation}
\proof
The convergence is implied by \eqref{eq:tmp821} that is proved in Appendix~\ref{Proof:hyper-sperical_cap}.
\endproof
\end{proposition}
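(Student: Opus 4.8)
The plan is to pass to a Gaussian representation of the hinge vector and to reduce the chordal-distance event to a concentration statement about chi-squared variables. Since $\bff$ is isotropic I may write $\bff = \bx/\|\bx\|$ with $\bx$ having i.i.d.\ standard Gaussian entries; partitioning $\bx$ into the $M$ blocks $\bv_1,\dots,\bv_M$ of length $L$, the block norms $R_i := \|\bv_i\|$ satisfy $R_i^2 = Z_i \sim \mathcal{X}^2(L)$ independently, $T := \sum_i Z_i = \|\bx\|^2 \sim \mathcal{X}^2({\sf Dim})$, and $h_i = R_i/\sqrt{T}\ge 0$. The first step is the exact identity
\begin{equation}
d_c^2(\bh_{\sf ref},\bh) = 1 - \big(\bh^T\bh_{\sf ref}\big)^2 = 1 - \frac{\big(\sum_i R_i\big)^2}{M\,T} = \frac{\sum_{i=1}^M (R_i - \bar R)^2}{\sum_{i=1}^M R_i^2},
\end{equation}
which uses $h_i\ge 0$ and $\bh_{\sf ref}=\tfrac{1}{\sqrt M}\mathbf{1}_{M\times1}$, together with the rearrangement $M\sum_i R_i^2 - (\sum_i R_i)^2 = M\sum_i(R_i-\bar R)^2$ where $\bar R = \tfrac1M\sum_i R_i$. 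This exhibits $d_c^2$ as the normalized empirical spread of the block norms, which is the geometric heart of the claim: all blocks share the same expected norm, so $\bh$ must concentrate on the diagonal direction $\bh_{\sf ref}$.

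Second, I would establish convergence in mean by computing $\mathbb{E}[d_c^2]$ in closed form. Because $(h_1^2,\dots,h_M^2)$ is Dirichlet with parameters $(L/2,\dots,L/2)$, the Beta--$\mathcal{X}^2$ lemma gives $h_i^2 \sim \mathrm{Beta}(L/2,({\sf Dim}-L)/2)$ as in \eqref{Hinge_distribution}, and the cross-moment is the Gamma ratio $\mathbb{E}[h_1 h_2] = \tfrac{2}{{\sf Dim}}\big(\Gamma(\tfrac{L+1}{2})/\Gamma(\tfrac L2)\big)^2$. Since $\sum_i h_i^2=1$ deterministically, expanding $\mathbb{E}[d_c^2] = 1 - \tfrac1M - (M-1)\mathbb{E}[h_1h_2]$ and inserting $\Gamma(\tfrac{L+1}{2})/\Gamma(\tfrac L2) = \sqrt{L/2}\,(1 - \tfrac{1}{4L} + O(L^{-2}))$ cancels the $O(1)$ terms and leaves $\mathbb{E}[d_c^2] = \tfrac{M-1}{2ML} + O(L^{-2})$. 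This alone proves $\bh\to\bh_{\sf ref}$ in mean square as $L\to\infty$.

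Third, for the explicit tail bound \eqref{eq:tmp821} I would translate $\{d_c>r\}$, i.e.\ $\{\sum_i(R_i-\bar R)^2 > r^2 T\}$, into a deviation event for the underlying chi-squared variables and apply Chebyshev's inequality using $\mathrm{Var}(\mathcal{X}^2(\nu)) = 2\nu$. The denominator $T$ concentrates around ${\sf Dim}=ML$, so after lower-bounding $T$ on a high-probability event one is left to control a chi-squared fluctuation whose degrees of freedom are of order $L$; the specific radius $r=\sqrt{2/(1+2L^{1/4})}$, which behaves like $r^2\sim L^{-1/4}$, is engineered precisely so that the resulting deviation threshold makes the $\mathrm{Var}/\text{threshold}^2$ ratio collapse to $L^{-1/2}$ as its leading term, with the factor $1+2L^{1/4}$ (carrying no $L^{1/2}$ correction) suppressing the intermediate $L^{-1}$ contribution so that the remainder is $O(L^{-3/2})$. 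Assembling the denominator- and numerator-deviation probabilities by a union bound then yields the stated estimate.

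The main obstacle is that $d_c^2$ is a ratio of two correlated sums of chi-squared variables, so a single scalar Chebyshev bound cannot be applied directly. The delicate part is the decoupling --- concentrating $T$ while simultaneously controlling the upper tail of $\sum_i(R_i-\bar R)^2$ --- and the careful bookkeeping of constants needed to land exactly on the coefficient $1$ of $L^{-1/2}$ with a remainder of order $L^{-3/2}$, rather than the cruder $O(L^{-3/4})$ or $O(L^{-1})$ that a naive moment bound applied to $d_c^2$ (via $\mathbb{E}[d_c^2]/r^2$) would produce.
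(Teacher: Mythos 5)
Your route is genuinely different from the paper's and, once one misconception is removed, it is not only correct but proves a strictly stronger bound. The paper never computes $\mathbb{E}[d_c^2]$. Instead it rewrites $\{d_c(\bh_{\sf ref},\bh)>r\}$ as $\bigl\{\frac{1}{\sqrt M}\sum_i h_i<\sqrt{1-r^2}\bigr\}$, exploits the independence of the normalized block-norm vector $(h_1,\dots,h_M)$ from the total sum $\sum_j z_j$ to decouple this ratio event, and then bounds the resulting quotient of probabilities: Chebyshev's inequality applied to $\frac{1}{M}\sum_i\sqrt{z_i}$ (whose mean and variance are controlled through the same Gamma ratio $\Gamma(\frac{L+1}{2})/\Gamma(\frac{L}{2})$ that you use) for the numerator, and a Chernoff bound for the denominator; this lands exactly on $L^{-1/2}+O(L^{-3/2})$, and along the way requires a side condition $M\geq 50$. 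You instead establish the exact identity $d_c^2(\bh_{\sf ref},\bh)=\sum_i(R_i-\bar R)^2/\sum_i R_i^2$ and compute $\mathbb{E}[d_c^2]=\frac{M-1}{2ML}+O(L^{-2})$ in closed form from the Dirichlet structure. Both steps check out: the cross-moment $\mathbb{E}[h_1h_2]=\frac{2}{ML}\bigl(\Gamma(\frac{L+1}{2})/\Gamma(\frac{L}{2})\bigr)^2$ follows from the independence of the Dirichlet vector and the sum, and the Stirling cancellation is correct. This already gives mean-square convergence of $\bh$ to $\bh_{\sf ref}$, which is cleaner than the paper's appeal to the law of large numbers.

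The one thing to fix is your final paragraph. You dismiss the bound $\text{Pr}(d_c>r)\le \mathbb{E}[d_c^2]/r^2$ because it yields $O(L^{-3/4})$, which you call ``cruder'' than $L^{-1/2}+O(L^{-3/2})$. This inverts the comparison: since $r^{-2}=\frac{1+2L^{1/4}}{2}$, Markov's inequality gives
\begin{equation}
\text{Pr}\l(d_c(\bh_{\sf ref},\bh)>r\r)\;\le\; \frac{\mathbb{E}[d_c^2]}{r^2} \;=\; \frac{M-1}{2M}\,L^{-\frac{3}{4}}+O\l(L^{-1}\r),
\end{equation}
and $L^{-3/4}$ is asymptotically \emph{smaller} than $L^{-1/2}$, so for sufficiently large $L$ this already implies \eqref{eq:tmp821}. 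The proposition is a one-sided upper bound, not an asymptotic equality; there is no need to ``land on the coefficient $1$ of $L^{-1/2}$.'' Consequently, the elaborate machinery you sketch in your third step --- conditioning on a high-probability event for $T$, a separate Chebyshev bound on $\sum_i(R_i-\bar R)^2$, a union bound, and delicate tuning of $r$ --- is unnecessary: your first two steps plus a single application of Markov's inequality constitute a complete proof, with a conclusion that is in fact stronger than the paper's.
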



\begin{remark}
\emph{(Geometric Interpretation). According to
Proposition~\ref{Proposition:hyper-sperical_cap}, the hinge vector
converges to $\bh_{\sf ref}$ at least geometrically fast in block
dimensionality $L$. To be specific, the hinge vector locates with high probability within a ball of radius $r$ (with respect to
the chordal distance) on the Grassmann manifold centered at
$\href$, namely, $\Bhref(r) = \{\bh \;| \;\bh^T\bh = 1, d_c(\bh_{\sf ref},\bh)\leq r\}$, where $r$ converges to zero as $L$ grows.
}
\end{remark}

\subsubsection{Distortion analysis on the hinge vector} 

Based on the above analyses, we are now ready to derive the average distortion for quantizing the hinge vector.
To this end, we first introduce the following lemma.

\begin{lemma} \label{lemma:ball}
  If the set \emph{$\mathcal{C'} = \mathcal{C}\bigcap\Bhref( (1+\alpha)r)$} is
  not empty for some $\alpha>0$, then, for any \emph{$\bx \in \Bhref(r)$}, we have
  \begin{equation}
    \min_{\bc\in\mathcal{C}'} d_c(\bc, \bx) \le
    \bigl(1+\frac{2}{\alpha}\bigr) \min_{\bc\in \mathcal{C}}
    d_c(\bc, \bx) \label{eq:ball}
  \end{equation}%
\end{lemma}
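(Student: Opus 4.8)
The plan is to treat this as a covering-type argument whose only real engine is the triangle inequality for the chordal distance, so the first step is to secure that inequality. I would verify that $d_c$ is a genuine metric on the Grassmann manifold of lines by passing to the projection embedding $\bx \mapsto \bx\bx^T$: a short computation gives $\|\bx\bx^T - \by\by^T\|_F = \sqrt{2}\,d_c(\bx,\by)$, so $d_c$ is a positive scalar multiple of the Frobenius distance between rank-one projectors and therefore inherits the triangle inequality (and is insensitive to sign flips, which is consistent with the positivity assumptions on $\href$ and $\Ch$). With both the forward and reverse triangle inequalities in hand, the remainder is elementary.

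Next I would let $\bc^\star \in \arg\min_{\bc\in\mathcal{C}} d_c(\bc,\bx)$ be a global minimizer and write $d^\star = d_c(\bc^\star,\bx)$, then split into two cases. If $\bc^\star \in \mathcal{C}'$ then $\min_{\bc\in\mathcal{C}'} d_c(\bc,\bx) = d^\star$ and the claim holds trivially because $1 + 2/\alpha \ge 1$. Otherwise $\bc^\star \notin \Bhref((1+\alpha)r)$, i.e. $d_c(\href,\bc^\star) > (1+\alpha)r$; combining this with $d_c(\href,\bx) \le r$ (the hypothesis $\bx \in \Bhref(r)$) via the reverse triangle inequality yields the lower bound $d^\star \ge d_c(\href,\bc^\star) - d_c(\href,\bx) > \alpha r$, hence $r < d^\star/\alpha$.

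To finish, I would invoke the nonemptiness of $\mathcal{C}'$: pick any $\bc' \in \mathcal{C}'$, so $d_c(\href,\bc') \le (1+\alpha)r$, and the forward triangle inequality gives $d_c(\bc',\bx) \le d_c(\bc',\href) + d_c(\href,\bx) \le (1+\alpha)r + r = (2+\alpha)r$. Chaining this with $r < d^\star/\alpha$ produces $\min_{\bc\in\mathcal{C}'} d_c(\bc,\bx) \le d_c(\bc',\bx) \le (2+\alpha)r < \tfrac{2+\alpha}{\alpha}\,d^\star = \bigl(1+\tfrac{2}{\alpha}\bigr)\,\min_{\bc\in\mathcal{C}} d_c(\bc,\bx)$, which is exactly \eqref{eq:ball}.

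I expect the only genuine obstacle to be the first step, namely confirming that $d_c$ really is a metric so that both triangle inequalities are legitimate on the manifold; once that is granted, the two-case estimate is routine. A secondary point worth checking is the direction of the strict versus weak inequalities at the boundary (whether $\bc^\star$ lies exactly on the sphere of radius $(1+\alpha)r$), but this affects only $\le$ versus $<$ and not the stated constant $1+\tfrac{2}{\alpha}$.
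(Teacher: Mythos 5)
Your proposal is correct and takes essentially the same route as the paper: both proofs reduce to the two triangle-inequality estimates through $\bh_{\sf ref}$, namely $d_c(\bc,\bx)\le(2+\alpha)r$ for any codeword inside the larger ball and $d_c(\bc',\bx)\ge\alpha r$ for any codeword outside it, and then pass to minima (your case split on the location of the global minimizer, versus the paper's pairwise comparison over $\mathcal{C}'$ and $\mathcal{C}\setminus\mathcal{C}'$, is only an organizational difference). Your preliminary check that $d_c$ is a genuine metric, via $\|\bx\bx^T-\by\by^T\|_F=\sqrt{2}\,d_c(\bx,\by)$, is sound extra rigor that the paper simply takes for granted when invoking the triangle inequality.
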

\begin{proof}
  Let $\bc\in\mathcal{C}'$ and
  $\bc'\in\mathcal{C}\setminus\mathcal{C'}$. Thus, we have $d_c(\bc,\href)\le
  (1+\alpha)r$ and $d_c(\bc',\href) \ge (1+\alpha)r$. By the triangle
  inequality, we have 
  \begin{align}
     d_c(\bc, \bx) &\le d_c(\href,\bx) + d_c(\bc,\href) \le (2+\alpha) r, \\ 
     d_c(\bc',\bx) &\ge d_c(\bc', \href) - d_c(\bx,\href) \ge \alpha r, 
  \end{align}%
  from which we have 
  \begin{equation}
    d_c(\bc, \bx) \le \bigl(1+\frac{2}{\alpha}\bigr) d_c(\bc',\bx). 
  \end{equation}
  Taking the minimum on both sides, \eqref{eq:ball} is straightforward.  
\end{proof}

Let us now construct a codebook of $N = 2^{B_\bh}$ codewords as follows. 
First, we draw $N'\ge N$ points uniformly from the Grassmann manifold as
for the normalized block gradient codebook. Then, we choose the $N$ codewords that
are closest to $\href$ to form the codebook $\mathcal{C}_{\bh}$. To
analyze the average distortion, we introduce two balls $\Bhref(r)$ and $\Bhref(
(1+\alpha) r)$ for some $\alpha>0$ and $r>0$ that can be optimized later
on. Let us consider the following encoding rules.
\begin{itemize}
  \item If $\bh$ lies outside of $\Bhref(r)$, an encoding
    error is declared. This event has probability $P_1(r)$. 
  \item If no codeword lies inside $\Bhref( (1+\alpha)r)$, an encoding
    error is declared. This event has probability $P_2(\alpha,r)$.
  \item If there are more than $N$ codewords inside $\Bhref(
    (1+\alpha)r)$, an error is declared. This event has probability $P_3(\alpha,r)$.
\end{itemize}
We can upper-bound the distortion by $1$ whenever an error is declared,
then we have the following upper bound on the average distortion from
the union bound:
\begin{align}\label{Union_Bound}
  P_1(r) + P_2(\alpha,r) + P_3(\alpha,r) + (1-P_1(r))\left(
  1+\frac{2}{\alpha} \right) D(N'),
\end{align}%
where the last term is from Lemma~\ref{lemma:ball} and $D(N')$ is the
average distortion for a uniform random quantizer with $N'$ codewords. 

In particular, $P_3(\alpha,r)$ is the complementary cumulative
distribution function of a binomial distribution with parameter
$N'$ and $p$ where $p$ is the probability that a uniformly distributed
point falls inside the ball.

\begin{lemma} \label{Q-error_Hinge}
The distortion for quantizing the hinge vector can be upper-bounded as
\begin{align}\label{Upper_hinge}
\Dchor{\Ch} \leq L^{-\frac{1}{2}}(\beta_L2^{-\frac{2B_\bh}{M-1}} + 1) + O(L^{-\frac{3}{2}}),
\end{align}
where $\beta_L = \frac{1+r^{\frac{1}{2}}}{1-r^{\frac{1}{2}}}$ with $r = \sqrt{\frac{2}{1+ 2L^{\frac{1}{4}}}} \approx L^{-\frac{1}{8}}$.
\proof
See Appendix~\ref{Proof: Q-error_Hinge}.
\endproof
\end{lemma}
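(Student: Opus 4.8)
My plan is to take the union bound \eqref{Union_Bound} as the starting point and to bound its four contributions for a judicious choice of the free parameters $N'$, $\alpha$, and $r$. The term $P_1(r)$ is supplied directly by Proposition~\ref{Proposition:hyper-sperical_cap}: with $r=\sqrt{2/(1+2L^{1/4})}$ it is at most $L^{-1/2}+O(L^{-3/2})$, and this is the origin of the additive ``$1$'' (scaled by $L^{-1/2}$) inside \eqref{Upper_hinge}, since whenever $\bh$ escapes $\Bhref(r)$ we pay at most the maximal squared chordal distance $1$.

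The heart of the argument is the last term $(1-P_1(r))(1+\tfrac{2}{\alpha})D(N')$. First I would estimate $p$, the normalized measure of the cap $\Bhref((1+\alpha)r)$, via the standard small-cap volume law on the Grassmann manifold, $p\asymp c_M((1+\alpha)r)^{M-1}$. I would then operate at $N'p\asymp N=2^{B_\bh}$, i.e.\ $N'\asymp N/p$: this makes the expected number of sampled points landing inside the cap of order $N$, so that $P_2(\alpha,r)=(1-p)^{N'}\le e^{-N'p}$ is exponentially small in $N$ and the binomial upper tail $P_3(\alpha,r)=\Pr(\mathrm{Bin}(N',p)>N)$ is controlled by a Chernoff bound; both can then be pushed below $O(L^{-3/2})$ and absorbed. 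For $D(N')$ I would invoke the packing/random-codebook estimate of Lemma~\ref{Lemma:Block_Q-error} in dimension $M$, giving $D(N')\lesssim(N')^{-2/(M-1)}=N^{-2/(M-1)}p^{2/(M-1)}$, whose factor $N^{-2/(M-1)}=2^{-2B_\bh/(M-1)}$ is exactly the resolution-dependent factor in \eqref{Upper_hinge}.

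The one genuinely clever step is the choice of the dilation parameter: taking $\alpha=r^{-1/2}-1$ collapses the Lemma~\ref{lemma:ball} amplification factor to $1+\tfrac{2}{\alpha}=\frac{1+r^{1/2}}{1-r^{1/2}}=\beta_L$, precisely the constant named in the statement, and makes $(1+\alpha)r=r^{1/2}$ so that $p^{2/(M-1)}$ reduces to a small explicit power of $r$. Substituting $r=\sqrt{2/(1+2L^{1/4})}$ then expresses $r$, $\beta_L$, and $p$ as explicit powers of $L$; collecting $P_1(r)$ together with the distortion term and sweeping $P_2,P_3$ into $O(L^{-3/2})$ produces the grouped bound $L^{-1/2}(\beta_L2^{-2B_\bh/(M-1)}+1)+O(L^{-3/2})$.

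The main obstacle I anticipate is the simultaneous control of the three error events against the distortion: ``$P_2$ small'' wants $N'p$ large, ``$P_3$ small'' wants $N'p$ safely below $N$, and ``$D(N')$ small'' wants $N'$ large, and these requirements pull in opposite directions. Pinning $N'\asymp N/p$ reconciles them, but it forces a careful Chernoff estimate of the binomial tail $P_3$ that must be shown to be of strictly lower order than $L^{-1/2}$ throughout the regime of interest (in particular when $B_\bh$ is small, so that $N$ is not large). A secondary technical point is the justification, using the isotropy of the uniform measure near $\href$, that the cap-restricted conditional distortion of the auxiliary uniform quantizer may be replaced by its global average $D(N')$; this is what lets Lemma~\ref{lemma:ball} feed directly into the random-codebook bound.
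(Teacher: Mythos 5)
Your skeleton matches the paper's proof exactly: the same union bound \eqref{Union_Bound} over the same three error events, Lemma~\ref{lemma:ball} with the dilation $1+\alpha=r^{-1/2}$ so that $1+\frac{2}{\alpha}=\frac{1+r^{1/2}}{1-r^{1/2}}=\beta_L$, a cap-volume estimate for the hitting probability, and the random-codebook distortion $D(N')\lesssim (N')^{-2/(M-1)}$. The genuine gap is the single step you call the reconciliation: pinning $N'\asymp N/p$. That choice makes the stated bound unreachable. Indeed, with $N'\asymp N/p$ you get $D(N')\lesssim N^{-2/(M-1)}p^{2/(M-1)}$, and since by your own volume law $p\asymp c_M\bigl((1+\alpha)r\bigr)^{M-1}=c_M\,r^{(M-1)/2}$, you obtain $p^{2/(M-1)}\asymp r\approx L^{-1/8}$ (with the paper's cap formula, whose exponent is $\frac{M-1}{2}$, it would be $r^{1/2}\approx L^{-1/16}$, even weaker). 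Hence your last term is $\beta_L 2^{-2B_\bh/(M-1)}\,\Theta(L^{-1/8})$, not $\beta_L 2^{-2B_\bh/(M-1)}L^{-1/2}$: the final ``collecting'' step you assert does not survive the arithmetic, and your argument proves only the strictly weaker estimate $\Dchor{\Ch}\lesssim L^{-1/2}+\beta_L 2^{-2B_\bh/(M-1)}L^{-1/8}+e^{-\Theta(2^{B_\bh})}$. To obtain the $L^{-1/2}$ prefactor one needs $N'\gtrsim 2^{B_\bh}L^{(M-1)/4}$, which is exactly what the paper takes, $N'=2^{B_\bh+1}L^{\frac{M-1}{4}+1}\bigl(\tfrac{M-1}{2}\bigr)^{1/2}\sqrt{\pi}$, and this forces $N'p\gg N$ --- in direct conflict with your requirement that at most $N$ of the $N'$ sampled points land in $\Bhref((1+\alpha)r)$. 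So the tension you identify at the end is real, but it is not reconciled by $N'p\asymp N$; it can only be resolved by abandoning that constraint, which is what the paper does. (It is fair to note that the paper's own treatment of $P_3$ is then delicate: it applies an upper-tail Chernoff-type estimate to $\Pr(X\ge N)$ in the regime where the mean $N'\zeta$ exceeds $N$, where such a bound does not hold as stated.)

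A second, related defect: your claim that $P_2$ and $P_3$ ``can be pushed below $O(L^{-3/2})$'' is unfounded under your parameterization. With $N'p\asymp N$ both probabilities are $e^{-\Theta(N)}=e^{-\Theta(2^{B_\bh})}$, a quantity that does not depend on $L$ at all; for fixed or small $B_\bh$ (the low-resolution regime the paper explicitly cares about) these terms are constants and cannot be absorbed into $O(L^{-3/2})$. Worse, if $N'p$ is pinned at $N$ itself rather than a fraction of it, $\Pr\bigl(\mathrm{Bin}(N',p)>N\bigr)$ is close to $\tfrac{1}{2}$, not small. By contrast, the paper's large $N'$ gives $N'\zeta\ge 2^{B_\bh}L$, hence $P_2\le e^{-L}$, which does decay in $L$ as the lemma requires. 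In short, your proposal reproduces the structure of the paper's argument, but the one parameter choice you single out as the clever step is precisely what destroys both the $L^{-1/2}$ scaling and the absorption of the error probabilities, so the proof as proposed does not establish the lemma.
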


Several observations can be made from the above lemma. First, as the codebook resolution $B_{\hv}$ increases, the upper bound of the quantization error reduces accordingly due to the reduction of pairwise codewords distance. Second, given $B_{\hv}$, the upper bound is a decreasing function of the block length $L$. This is because the hinge vector tends to converge to $\bh_{\sf ref}$ given larger $L$ and the resulting quantization error reduces.
\subsubsection{Distortion analysis on the stochastic-gradient norm} given a uniform quantizer for the norm of stochastic gradients, the average distortion can be upper-bounded 
as
\begin{align}\label{Distortion: Norm}
\Deucl{\Crho} \leq \l(\frac{\Delta^{\sf quant}}{2}\r)^2,
\end{align}
where $\Delta^{\sf quant}$ denotes the quantization interval of a uniform quantizer.
\section{Quantization Bit Allocation}\label{Bit_allocation}
In this section, a practical \emph{bit-allocation scheme} will be
developed. Specifically, given a fixed number $B$ of bits, for quantizing the stochastic gradient vector, we aim to determine the scheme on how to allocate these bits to the three codebooks derived in the preceding section.

Given the proposed hierarchical quantization scheme, the original bit-allocation problem is formulated as 
\begin{align}
  \min_{B_\rho, B_\bs, B_\bh} &\left\{ 
   \Deucl{\Crho} + \Eg \left(
    \Dchor{\Cs} +
   \Dchor{\Ch}  \right) 
  \right\},\label{Eq: Distortion2}\\
  &\text{s.t.}~B_\rho + M B_\bs + B_{\hv} = B.
\end{align}%
Here, the distortions are replaced by the upper bounds derived
previously.  

For the \emph{Bit-allocation problem}, we assume for tractability that the
elements of stochastic gradient $\bg$ are i.i.d.~Gaussian 
with zero mean and unit variance. Note that this is also the worst case in the
sense that for a given variance the differential entropy is maximized
with Gaussian distributions. Then, it follows that $\Eg = ML$ and $\rho
= \lVert\bg\rVert \sim \mathcal X(ML)$. Given $\rho\in
[0,\rho_{\max}]$\footnote{Due to the fact that for $\rho \sim \mathcal
X(ML)$, $\mathbb E[\rho] + \sqrt{\text{Var[$\rho$]}} \le \mathbb E[\rho^2] + \sqrt{\text{Var[$\rho^2$]}}$ with $\sqrt{\text{Var[$\cdot$]}}$ denoting the standard deviation, for tractability, we take $\rho_{\max} = \mathbb E[\rho^2] + \sqrt{\text{Var[$\rho^2$]}} = ML + \sqrt{2ML}$.}, it follows from~\eqref{Distortion: Norm} that $\Deucl{\Crho}  \leq \frac{\rho^2_{\max}}{(2^{B_{\rho}+1}+2)^2}\leq \frac{\rho^2_{\max}}{4}2^{-2B_\rho}$. Then, the original bit-allocation problem can
be relaxed as $(\bf{P'})$, which is given at the top of this page.
\begin{figure*}
\begin{align}
(\bf{P'})\quad  \min_{B_\rho, B_\bs, B_\bh} &\left\{ 
    \frac{\rho^2_{\max}}{4}2^{-2B_\rho} + \Eg 2^{-\frac{2(B_\bs-1)}{L-1}}
    + \Eg \beta_L2^{-\frac{2B_\bh}{M-1}}L^{-\frac{1}{2}}
  \right\},\label{Eq: Distortion3}\\
  &\text{s.t.}~B_\rho + M B_\bs + B_{\hv} = B, \label{Bit_constaint}
\end{align}
\hrule
\end{figure*}

The above problem $(\bf{P'})$ is convex, and the optimal solutions can be derived by leveraging \emph{Karush-Kuhn-Tucker} (KKT) conditions 
as follows
\begin{align}
(\text{KKT conditions}) \left\{ 
 \begin{array}{rcl}
\lambda^*+\frac{\partial f(B^*_\rho,B^*_\bs,B^*_\bh)}{\partial B^*_\rho} = 0 \\
M\lambda^* + \frac{\partial f(B^*_\rho,B^*_\bs,B^*_\bh)}{\partial B^*_\bs} = 0\\
\lambda^* + \frac{\partial f(B^*_\rho,B^*_\bs,B^*_\bh)}{\partial B^*_\bh} = 0,
\end{array}
\right. 
\end{align}
where $f(B^*_\rho,B^*_\bs,B^*_\bh)$ is the objective of the optimization
problem, i.e.~\eqref{Eq: Distortion3}, and $\lambda^*$ is the Lagrange multiplier. Solving the 
above equations, we obtain the following bit-allocation scheme.
\begin{framed}
\begin{scheme}\emph{(Quantization Bit Allocation). To minimize the distortion, the bits $B$ can be allocated to the three quantizers as follows:
\begin{align}
B^*_\rho & = \lfloor \log_2 \frac{ML+\sqrt{2ML}}{2}+ \frac{1}{2}\log_2\ln 2+\frac{1}{2}-\frac{1}{2}\log_2\lambda^*\rfloor,\label{Bit: norm}\\
B^*_\bs & = \lfloor\frac{L-1}{2}\log_2\frac{2L}{L-1}+1 + \frac{L-1}{2}\log_2\ln 2-\frac{L-1}{2}\log_2 \lambda^*\rfloor ,\label{Bit: block}\\
B^*_\bh & = \lfloor\frac{M-1}{2}\log_2\frac{2}{M-1} + \frac{M-1}{2}\log_2\beta_LM\sqrt{L}\nn\\
& + \frac{M-1}{2}\log_2\ln 2-\frac{M-1}{2}\log_2\lambda^* \rfloor,\label{Bit: hinge}
\end{align}
where $\beta_L = \frac{1+ L^{-\frac{1}{16}}}{1-L^{-\frac{1}{16}}}$ and $\lambda^*$ can
 be obtained by substituting the above equations into~\eqref{Bit_constaint} as 
\begin{align}\label{Eq:lambda}
\log_2 \lambda^* = \frac{2}{ML}\log_2 \frac{ML+\sqrt{2ML}}{2} +\frac{L-1}{L}\log_2\frac{2L}{L-1}+ \frac{2}{L}+\frac{1}{ML} \nn\\
+ \frac{2(M-1)}{ML}\log_2\frac{2}{M-1}+ \frac{M-1}{ML}\log_2\beta_LM\sqrt{L} + \log_2\ln 2- \frac{2B}{ML}.
\end{align}
}
\end{scheme}
\end{framed}

Next, it is necessary to show that the above bit-allocation scheme is optimal in the sense of scaling law. To this end, we bound $\mathbb{E}\left[\lVert\bg-\widehat{\bg}\rVert^2\right]$ in the following theorem.
\begin{theorem}\label{Theorem: Lower_Upper_Bound}\emph{(Optimality of Bit Allocation).}
For sufficiently large $L$ and at the low resolution regime\footnote{The term `low resolution' is declared in the sense that only fewer than one bit is exploited for quantizing each coefficient of the stochastic gradient. This is a popular regime being explored in the area of edge learning~\cite{2018signsgd}.}, i.e. $B\leq ML$, the distortion under MSE metric per dimension, i.e. $\frac{\mathbb{E}\left[\lVert\bg-\widehat{\bg}\rVert^2\right]}{ML}$ can be bounded as 
\begin{equation}\label{Ineq: Upper_lower}
-\frac{2\ln 2}{ML}B \leq \ln \frac{\mathbb{E}\left[\lVert\bg-\widehat{\bg}\rVert^2\right]}{ML} \leq c_{\sf gap} - \frac{2 \ln 2}{ML}B + O\l({L^{-\frac{3}{2}}}\r),
\end{equation}
where $c_{\sf gap} =  \ln 2-\ln \frac{2L}{L-1} +\frac{2}{ML}\ln \frac{ML+\sqrt{ML}}{2}+\frac{L-1}{L}\ln \frac{2L}{L-1}+\frac{2}{L}\ln 2+\frac{2(M-1)}{ML}\ln \frac{2}{M-1}+ \frac{\ln 2}{ML}+\frac{M-1}{ML}\ln \beta_LM\sqrt{L} + 2(\beta_L+1)L^{-\frac{1}{2}} -2(\beta_L+1)^2L^{-1}$ with $\beta_L = \frac{1+ L^{-\frac{1}{16}}}{1-L^{-\frac{1}{16}}}$; $B$ denotes the number of bits used for quantizing the stochastic gradient $\bg$.
\end{theorem}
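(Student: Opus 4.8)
The plan is to establish the two inequalities separately: the left (lower) bound is a source-coding converse that holds for \emph{any} $B$-bit quantizer, while the right (upper) bound comes from substituting the closed-form resolutions of the Scheme into the distortion upper bounds of Section~\ref{Distortion_analyses}.

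For the lower bound I would invoke the rate--distortion converse for the i.i.d.\ Gaussian source assumed in this section. Because the output $\widehat\bg$ takes at most $2^{B}$ values, we have $B \ge H(\widehat\bg)\ge I(\bg;\widehat\bg)$, while the converse part of rate--distortion theory gives $I(\bg;\widehat\bg)\ge R_{\bg}(\mathbb E[\|\bg-\widehat\bg\|^2])$, where for the i.i.d.\ unit-variance Gaussian source $R_{\bg}(D)=\frac{ML}{2}\log_2\frac{ML}{D}$ for $D\le ML$. Chaining these gives $\mathbb E[\|\bg-\widehat\bg\|^2]\ge ML\,2^{-2B/(ML)}$, and taking logarithms yields exactly $-\frac{2\ln 2}{ML}B \le \ln\frac{\mathbb E[\|\bg-\widehat\bg\|^2]}{ML}$. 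The hypothesis $B\le ML$ keeps the per-dimension rate below one bit, so the per-dimension distortion $2^{-2B/(ML)}$ stays in $[\tfrac14,1]$ and the Gaussian formula applies without the zero-rate clipping.

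For the upper bound I would begin from the exact MSE identity of the Proposition, $\mathbb E[\|\bg-\widehat\bg\|^2]=\Deucl{\Crho}+\Eg\Deucl{\Cf}$, discard the negative cross term so that $\Deucl{\Cf}\le\Deucl{\Cs}+\Deucl{\Ch}$, and convert each normalized-vector distortion from Euclidean to chordal through $\Deucl{}\le 2\Dchor{}$, obtaining $\mathbb E[\|\bg-\widehat\bg\|^2]\le \Deucl{\Crho}+2\Eg(\Dchor{\Cs}+\Dchor{\Ch})$. Substituting the norm bound $\Deucl{\Crho}\le\frac{\rho_{\max}^2}{4}2^{-2B_\rho}$ together with Lemma~\ref{Lemma:Block_Q-error} and Lemma~\ref{Q-error_Hinge}, and then the optimal resolutions $(B_\rho^*,B_\bs^*,B_\bh^*)$ of the Scheme, makes the rate-dependent part of the bound equal to the objective of $(\mathbf{P'})$ at its minimizer. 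Here the KKT stationarity conditions do the real work: they force the norm, block and hinge terms to equal $\frac{\lambda^*}{2\ln 2}$, $\frac{M(L-1)\lambda^*}{2\ln 2}$ and $\frac{(M-1)\lambda^*}{2\ln 2}$, whose coefficients sum to $1+M(L-1)+(M-1)=ML$, so the minimized objective collapses neatly to $\frac{ML\,\lambda^*}{2\ln 2}$. Dividing by $ML$, taking logarithms, and inserting $\log_2\lambda^*$ from~\eqref{Eq:lambda} then produces the common slope $-\frac{2\ln 2}{ML}B$ plus a batch of $B$-independent terms.

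The main obstacle is the constant bookkeeping that assembles those residual terms into $c_{\sf gap}$ while confining the error to $O(L^{-3/2})$. Three contributions must be tracked. First, the Euclidean-to-chordal factor of $2$ doubles the two (dominant) normalized-vector terms but not the norm term, which multiplies the leading per-dimension distortion by essentially $2$ and so contributes the leading $\ln 2$, up to a refinement that accounts for the extra $\ln\frac{2L}{L-1}$ term present in $c_{\sf gap}$ but not in the natural-log form of~\eqref{Eq:lambda}. Second, the rate-independent ``$+1$'' inside the hinge bound of Lemma~\ref{Q-error_Hinge} adds a term of order $\Eg L^{-1/2}$ which, since the dominant per-dimension distortion is $\Theta(1)$ in the regime $B\le ML$, enters only as a multiplicative factor $1+2(\beta_L+1)L^{-1/2}$. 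Third, the $o(1)$ and $O(L^{-3/2})$ remainders of the two Grassmannian lemmas must be carried along. The delicate step is that this multiplicative factor has to be expanded before taking the logarithm, as $\ln(1+2(\beta_L+1)L^{-1/2})=2(\beta_L+1)L^{-1/2}-2(\beta_L+1)^2L^{-1}+O(L^{-3/2})$; this is precisely what reproduces the last two explicit terms of $c_{\sf gap}$ and pushes everything else into the claimed $O(L^{-3/2})$ remainder. Everything remaining is routine substitution of~\eqref{Eq:lambda}.
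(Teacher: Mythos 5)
Your lower bound coincides with the paper's own: both are the one-line Gaussian rate--distortion converse $B \ge \frac{ML}{2}\log_2\frac{1}{D}$ rearranged, with the condition $B\le ML$ playing the same role. Your upper bound, however, is organized in a genuinely different way. The paper never balances the three distortion terms: it (i) shows the norm term is dominated by the block term (at the stationary point their ratio is $\frac{1}{M(L-1)}\le 1$), (ii) bounds the hinge rate factor by its $B_\bh=0$ worst case using the low-resolution fact $2^{2(B^*_\bs-1)/(L-1)}<4$, (iii) factors out $2\cdot 2^{-2(B^*_\bs-1)/(L-1)}$ and expands the logarithm of the residual bracket --- which is exactly where the terms $2(\beta_L+1)L^{-\frac{1}{2}}-2(\beta_L+1)^2L^{-1}$ in $c_{\sf gap}$ originate --- and only then substitutes \eqref{Bit: block} and \eqref{Eq:lambda}. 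You instead keep all three terms at their KKT-stationary values and collapse the minimized objective of $(\mathbf{P'})$ via $1+M(L-1)+(M-1)=ML$ to $\frac{ML\lambda^*}{2\ln 2}$; this identity appears nowhere in the paper and is arguably the cleaner computation. Two caveats. First, your claim that your multiplicative expansion ``precisely reproduces'' the last two terms of $c_{\sf gap}$ is overstated: those terms are artifacts of the paper's worst-casing of the hinge term against the block term, whereas your route (hinge rate term kept at its stationary value, explicit chordal-to-Euclidean factor of $2$ on only two of the three terms) produces a constant that differs at orders $L^{-\frac{1}{2}}$ and $L^{-1}$ --- in your favor, i.e.\ strictly smaller for large $L$, so the stated inequality with $c_{\sf gap}$ still follows, but it is not the same constant. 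Second, like the paper, you silently drop the floors in the Scheme and must absorb the unquantified $o(1)$ of Lemma~\ref{Lemma:Block_Q-error} into the $O(L^{-\frac{3}{2}})$ remainder. One thing the paper's looser route buys that yours hides: its bound is attained already with $B^*_\bh=0$, which is precisely the observation invoked after the theorem to justify the Practical Bit Allocation scheme.
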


It can be observed from~\eqref{Ineq: Upper_lower} that the scaling law of the upper bound is the same as that of the lower bound with respect to $B$. It is further noted that the upper bound in the above theorem is derived by setting $B^*_{\bh}=0$. It means that the derived scaling law is independent of the number of bits allocated for quantizing the hinge vector. This is aligned with the intuition that as block length $L$ increases, the hinge vector tends to converge to the constant vector and the corresponding distortion is close to zero. 
From more theoretic point of view, it follows from~\eqref{Upper_hinge}
that, at the low resolution regime, no matter how many bits are
allocated to $\Ch$, the decaying-rate of the average distortion is
asymptotically bounded by $L^{-\frac{1}{2}}$. Thereby, this motivates a
practical bit-allocation scheme as given below. 

\begin{framed}
\begin{scheme}\emph{(Practical Bit Allocation). For the high-dimensional
  stochastic gradient $\bg$, $ B^*_\rho = \l\lfloor \log_2
  \frac{ML+\sqrt{2ML}}{2}+\frac{1}{2}\log_2\ln 2+\frac{1}{2}-\frac{1}{2}\log_2\lambda^* \r\rfloor$ bits are
  allocated for quantizing its norm with $\lambda^*$ is defined
  in~\eqref{Eq:lambda}. All rest bits should be allocated to the
  codebook $\Cs$ while exploiting $\bh_{\sf ref}=
  \frac{1}{\sqrt{M}}\pmb{1}_{M\times1}$ as a surrogate for the hinge vector.}
\end{scheme}
\end{framed}

\begin{remark}\emph{
The above practical bit-allocation scheme makes the proposed hierarchical quantization scheme be of low-complexity. To be specific, the relative low-dimensional block gradients makes the design complexity of codebook $\Cs$ via line packing algorithm reduces significantly compared to quantizing the high-dimensional stochastic gradient as a whole. On the other hand, the design complexity is further reduced without constructing the codebook $\Ch$ for the hinge vector.
}
\end{remark}

\section{Learning Convergence Rate Analysis}\label{Learning_rate_analysis}
Given a typical quantization scheme for the stochastic gradient, one concern related is that whether it will lead to the convergence of the learning algorithm. Thereby, in this section, the convergence rate of the learning algorithm under the proposed hierarchical quantization scheme will be theoretically investigated.

We begin our analysis in the non-convex setting, where we follow the standard assumptions of the stochastic optimization literature (see e.g.,~\cite{2018signsgd}). The specific assumptions are given as follows.

\begin{assumption}\label{Assump:Lower_bound}
\emph{(Lower Bound).} For all $\boldsymbol\theta$ and some constant $F^*$, we have that the global objective value $F(\boldsymbol\theta)\geq F^*$.
\end{assumption}

\begin{assumption}\label{Assump:smoothness}
\emph{(Smoothness).} Let $\bar{\bg}(\boldsymbol\theta)$ denote the gradient of the global objective $F(\boldsymbol\theta)$ evaluated at point $\boldsymbol\theta = \l[\theta_1,\theta_2,\cdots,\theta_{\sf Dim}\r]^T$ with ${\sf Dim} = ML$. Then $\forall \boldsymbol\theta$ and $\boldsymbol\beta = \l[\beta_1,\beta_2,\cdots,\beta_{\sf Dim}\r]^T$, we require that for some non-negative constant vector $\bl = \l[l_1,l_2,\cdots,l_{\sf Dim}\r]^T$ 
\begin{equation}\l|F(\boldsymbol\beta)-\l[F(\boldsymbol\theta)+\bar{\bg}(\boldsymbol\theta)^T\l(\boldsymbol\beta-\boldsymbol\theta\r)\r]\r|\leq \frac{1}{2}\sum_{i+1}^{\sf Dim}l_i(\beta_i-\theta_i).
\end{equation}
\end{assumption}

\begin{assumption}\label{Assump:variance_bound}
\emph{(Variance Bound).} The stochastic gradient $\bg(\boldsymbol\theta)$ is unbiased that has coordinate bounded variance:
\begin{equation}
\mathbb E[\bg(\boldsymbol\theta)] = \bar{\bg}(\boldsymbol\theta)\quad\text{and}\quad \mathbb E\l[\l(\bg(\boldsymbol\theta)_i-\bar{\bg}(\boldsymbol\theta)_i\r)^2\r]\leq \sigma^2_i,
\end{equation}
for a vector of non-negative constants $\boldsymbol\sigma = [\sigma_1,\sigma_2,\cdots,\sigma_{\sf Dim}]^T$.
\end{assumption}

Under the above three standard assumptions, we have the following result.

\begin{theorem}\label{Convergence}\emph{(Learning Convergence Rate).}
  Let $N$ be the number of iterations for the federated learning
  algorithm, $K$ the total number of users, and $\eta = \frac{1}{\sqrt{l_0N}}$ the learning rate. It follows that
\begin{align}\label{Eq:learning_convergence}
\mathbb E\l[\frac{1}{N}\sum_{n=0}^{N-1}\lVert\bar{\bg}_n\rVert^2\r] 
\leq 
\frac{\sqrt{l_0}\l(\frac{1}{2K}{\mathbb E}\l[\l\lVert\bg-\widehat\bg\r\rVert^2\r] +\frac{\lVert\boldsymbol\sigma\rVert^2}{2K}+F_0-F^*\r)}{\sqrt{N}-\frac{\sqrt{l_0}}{2K}},
\end{align}
where $F_0$ is the initial objective value and $F^*$ is defined in Assumption~\ref{Assump:Lower_bound}; $l_0 = \lVert\bl\rVert_{\infty}$ with $\bl$ defined in Assumption~\ref{Assump:variance_bound}.
\proof
See Appendix~\ref{Proof: Convergence}.
\endproof
\end{theorem}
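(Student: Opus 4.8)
\emph{Proof sketch (plan).} The plan is to run the classical smoothness-based descent argument for non-convex SGD, treating the averaged quantized gradient $\widehat{\bg}[n]=\frac1K\sum_k\widehat{\bg}_k[n]$ as a noisy surrogate for the true global gradient $\bar{\bg}_n$, and then to account separately for the two error sources—stochastic sampling and quantization—each of which is damped by the $K$-fold averaging in~\eqref{ApproGloGrad}. First I would apply the smoothness inequality of Assumption~\ref{Assump:smoothness} along the update $\boldsymbol\theta[n+1]-\boldsymbol\theta[n]=-\eta\widehat{\bg}[n]$ and bound every coordinate Lipschitz constant by $l_0=\lVert\bl\rVert_\infty$, obtaining the one-step descent bound
\[
F(\boldsymbol\theta[n+1])\le F(\boldsymbol\theta[n])-\eta\,\bar{\bg}_n^T\widehat{\bg}[n]+\frac{\eta^2 l_0}{2}\lVert\widehat{\bg}[n]\rVert^2 .
\]

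Next I would take expectation over the sampling and quantization randomness conditioned on $\boldsymbol\theta[n]$. Using the unbiasedness in Assumption~\ref{Assump:variance_bound} (and treating the composite averaged error as zero-mean), the cross term collapses to $\mathbb{E}[\bar{\bg}_n^T\widehat{\bg}[n]]=\lVert\bar{\bg}_n\rVert^2$, while the second moment splits as $\mathbb{E}[\lVert\widehat{\bg}[n]\rVert^2]=\lVert\bar{\bg}_n\rVert^2+\mathbb{E}[\lVert\widehat{\bg}[n]-\bar{\bg}_n\rVert^2]$. The crucial estimate is the error variance: writing $\widehat{\bg}[n]-\bar{\bg}_n=\frac1K\sum_k(\bg_k[n]-\bar{\bg}_n)+\frac1K\sum_k(\widehat{\bg}_k[n]-\bg_k[n])$ and invoking independence across the $K$ devices, both contributions acquire a $\frac1K$ factor, giving $\mathbb{E}[\lVert\widehat{\bg}[n]-\bar{\bg}_n\rVert^2]\le\frac1K\bigl(\lVert\boldsymbol\sigma\rVert^2+\mathbb{E}[\lVert\bg-\widehat{\bg}\rVert^2]\bigr)$, where the first piece comes from the coordinate variance bound of Assumption~\ref{Assump:variance_bound} and the second is exactly the per-device quantization MSE characterized by the Proposition and Corollary~\ref{cor:MSE}.

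I would then substitute these two estimates into the descent bound, telescope over $n=0,\dots,N-1$, and apply the lower bound $F(\boldsymbol\theta[N])\ge F^*$ of Assumption~\ref{Assump:Lower_bound} together with $F_0=F(\boldsymbol\theta[0])$. Collecting the $\lVert\bar{\bg}_n\rVert^2$ contribution coming from the second-moment term onto the left-hand side shrinks the effective coefficient of the averaged gradient norm; inserting the prescribed step size $\eta=\frac{1}{\sqrt{l_0 N}}$ (so that $\eta^2 l_0=\frac1N$ and $N\eta=\sqrt{N/l_0}$) and dividing through then yields the stated fraction, with the residual negative correction $-\frac{\sqrt{l_0}}{2K}$ in the denominator and the two $\frac{1}{2K}$-weighted error terms in the numerator.

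The main obstacle is the quantization-error handling. Unlike the sampling noise, which is zero-mean and independent across devices by Assumption~\ref{Assump:variance_bound}, the codebook quantizer $\Quan(\cdot)$ is deterministic and in general biased, so neither the collapse of the cross term to $\lVert\bar{\bg}_n\rVert^2$ nor the clean $\frac1K$ variance reduction is automatic. Making the argument rigorous requires justifying that, after averaging over the $K$ devices, the aggregate quantization error behaves like an independent zero-mean perturbation whose mean square equals the per-device MSE $\mathbb{E}[\lVert\bg-\widehat{\bg}\rVert^2]$ supplied by the earlier distortion analysis; the remaining steps—the descent inequality, the telescoping sum, and the algebra induced by the choice of $\eta$—are routine once this reduction is in place.
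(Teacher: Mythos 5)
Your overall architecture matches the paper's proof: a smoothness-based one-step descent bound, conditional expectation with zero-mean treatment of both the sampling and quantization errors, telescoping over iterations, lower-bounding $F(\boldsymbol\theta[N])$ by $F^*$, and substituting $\eta = \frac{1}{\sqrt{l_0 N}}$. You also correctly flag the main logical weak point—that a deterministic codebook quantizer is in general biased—which the paper handles by writing $\widehat{\bg}^{(k)}_n = \bigl(\lVert\bg^{(k)}_n\rVert - \Delta\rho^{(k)}_n\bigr)\bigl(\bff^{(k)}_n - \Delta\bff^{(k)}_n\bigr)$ and assuming the component quantization errors are zero-mean; your identification of the needed reduction is accurate.

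However, there is a concrete gap in your final algebra: your second-moment treatment cannot produce the stated denominator $\sqrt{N} - \frac{\sqrt{l_0}}{2K}$. Your (rigorous) decomposition $\mathbb{E}\bigl[\lVert\widehat{\bg}[n]\rVert^2\bigr] = \lVert\bar{\bg}_n\rVert^2 + \mathbb{E}\bigl[\lVert\widehat{\bg}[n] - \bar{\bg}_n\rVert^2\bigr]$ leaves the coherent term $\lVert\bar{\bg}_n\rVert^2$ with coefficient $\frac{\eta^2 l_0}{2}$ and no $1/K$ attenuation, because the $K$ device gradients share the common conditional mean $\bar{\bg}_n$ and add coherently; only the deviation terms enjoy the $1/K$ reduction. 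Carrying your estimate through the telescoping sum with $\eta = \frac{1}{\sqrt{l_0 N}}$ gives the per-iteration coefficient $\eta - \frac{\eta^2 l_0}{2}$, hence a denominator $\sqrt{N} - \frac{\sqrt{l_0}}{2}$, which for $K>1$ is strictly weaker than the claimed bound. The paper reaches $\frac{\sqrt{l_0}}{2K}$ by a different—and non-rigorous—device: it invokes "quasi-orthogonality" of high-dimensional vectors to assert $\bigl\lVert\sum_{k}\widehat{\bg}^{(k)}_n\bigr\rVert^2 \approx \sum_{k}\lVert\widehat{\bg}^{(k)}_n\rVert^2$, so that after dividing by $K^2$ every term, including $\lVert\bar{\bg}_n\rVert^2$, is scaled by $1/K$. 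That approximation discards the coherent $K^2\lVert\bar{\bg}_n\rVert^2$ component of the sum and is defensible only when the gradients are noise-dominated; but it is precisely what the theorem's constant relies on. So either you adopt the paper's heuristic (and inherit its weakness), or you keep your rigorous route and obtain a theorem with $\sqrt{N}-\frac{\sqrt{l_0}}{2}$ in the denominator. As written, your sketch performs the latter analysis while claiming the former's constant, and that step does not go through.
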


Several observations can be made from~\eqref{Eq:learning_convergence} as
follows. First, the increment of the total iteration number $N$ leads to
the convergence of the learning algorithm. Specifically, as the number
of users $K\to \infty$, the convergence rate is asymptotically
$O\l(\frac{1}{\sqrt{N}}\r)$. Furthermore, as the number of users $K$
increases, the upper bound in~\eqref{Eq:learning_convergence} decreases.
This is because that the participation of more users, called multi-user gain, makes the
aggregated-and-averaged stochastic gradient closer to the true
gradient, leading to a faster convergence speed.

\section{Simulation results}\label{simulation}
Consider a FEEL system with one edge server and $K = 100$ edge devices.
The simulation settings are given as follows unless specified otherwise.
We consider the learning task of handwritten-digit recognition using the well-known MNIST dataset that consists of $10$ categories ranging from digit ``$0$" to ``$9$" and a total of $60000$ labeled training data samples.
The classifier model is implemented using a $6$-layer \emph{convolutional neural network} (CNN) that consists of two $5\times5$ convolution layers with ReLu activation (the first with $32$ channels, the second with $64$). Each followed with a $2\times2$ max pooling, a fully connected layer with $512$ units,  ReLu activation, and a final softmax output layer.  
Furthermore, it is noted that the total number of bits used for quantizing each coefficient of the stochastic gradients is $\frac{B_{\bs}}{L}+{\frac{B_{\rho}}{\sf Dim}}$ given $B_{\bh} = 0$ in the proposed bit-allocation scheme. Due to the fact that $\frac{B_{\rho}}{\sf Dim} = 0,{\sf Dim} \to \infty$, we define the number of bits per coefficient as $\frac{B_{\bs}}{L}$ without loss of generality.

\begin{figure}[t]
  \centering
\includegraphics[width=0.55\textwidth]{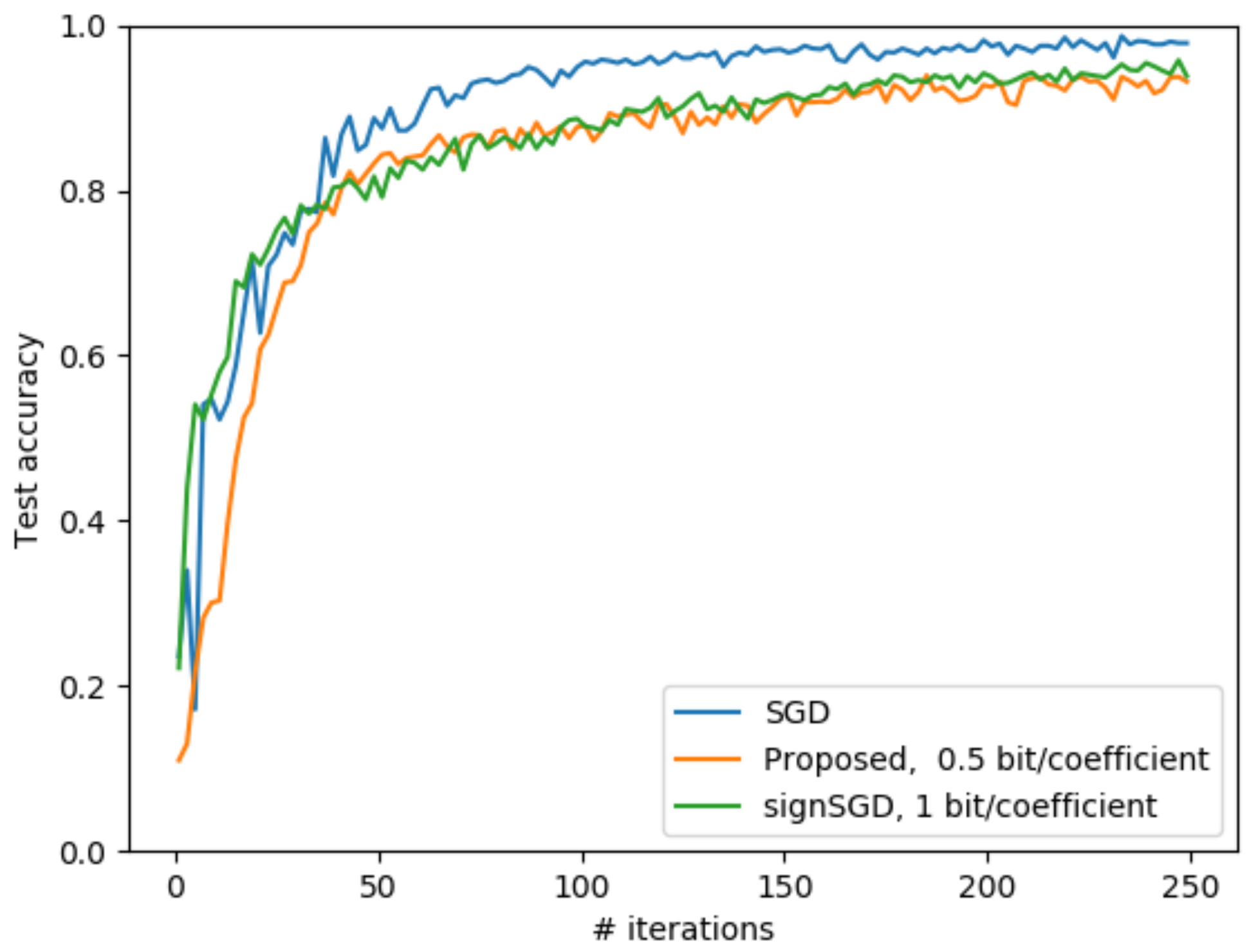} 
     \caption{Performance comparison of \emph{signSGD} and the proposed scheme.}
  \vspace{1mm}
  \label{Fig:performacne_comparison}
\end{figure}

\subsection{Performance of the Hierarchical Quantization Scheme}
The effectiveness of the proposed hierarchical quantization scheme is evaluated by benchmarking against \emph{signSGD} and SGD. The curves of the test accuracy versus the number iterations are illustrated in Fig.~\ref{Fig:performacne_comparison}. Several observations can be made as follows. First, using fewer bits, i.e. $0.5$ bit/coefficient, the performance of the proposed scheme is comparable to state-of-the-art \emph{signSGD}, which uses $1$ bit/coefficient. This attributes to the superiority of vector quantization over the scalar counterpart given the same bits at the low resolution regime. Furthermore, it can also be observed that SGD outperforms both quantization schemes because there exists quantization loss for both quantization schemes.

\begin{figure}[t]
  \centering
\includegraphics[width=0.55\textwidth]{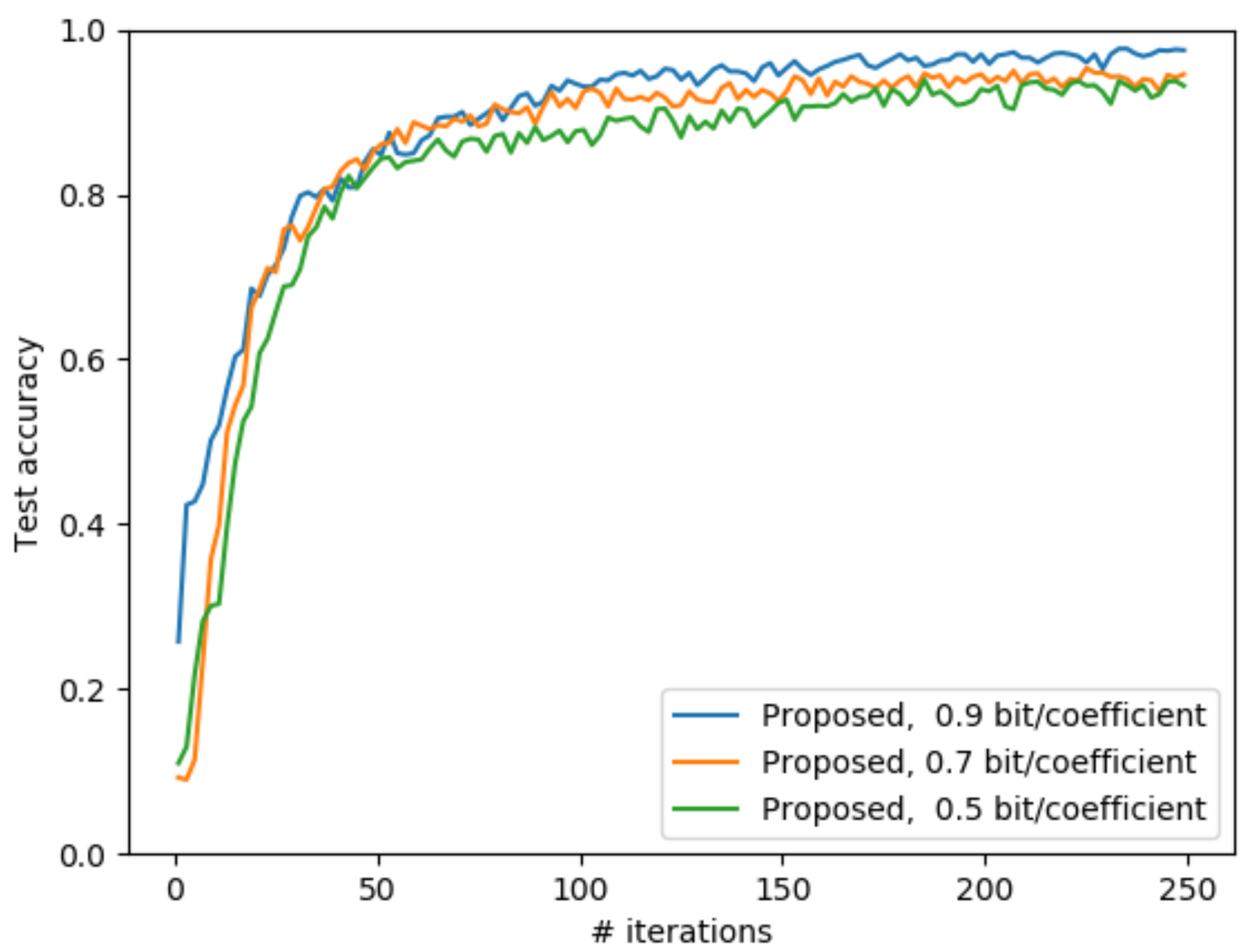} 
     \caption{Effect of the codebook resolution $B_{\bs}$ with block length $L  = 10$ and $B_{\rho} = 26$ bits.}
  \vspace{1mm}
  \label{Fig:effect_resolution}
\end{figure}

\subsection{Effect of the Codebook Resolution}
Given the block length $L$, the effect of the codebook resolution $B_{\bs}$ for $\Cs$ is evaluated, where the resolution for $C_{\rho}$ is fixed. The curves of test accuracy versus the number of iterations by varying the codebook resolution are illustrated in Fig.~\ref{Fig:effect_resolution}. It can be observed that as the codebook resolution increases, the learning performance improves accordingly. This is because the increment of resolution reduces the pairwise chordal distance among codewords. Then, the resulting quantization error reduces, giving rise to a better learning performance.

\begin{figure}[t]
  \centering
\includegraphics[width=0.55\textwidth]{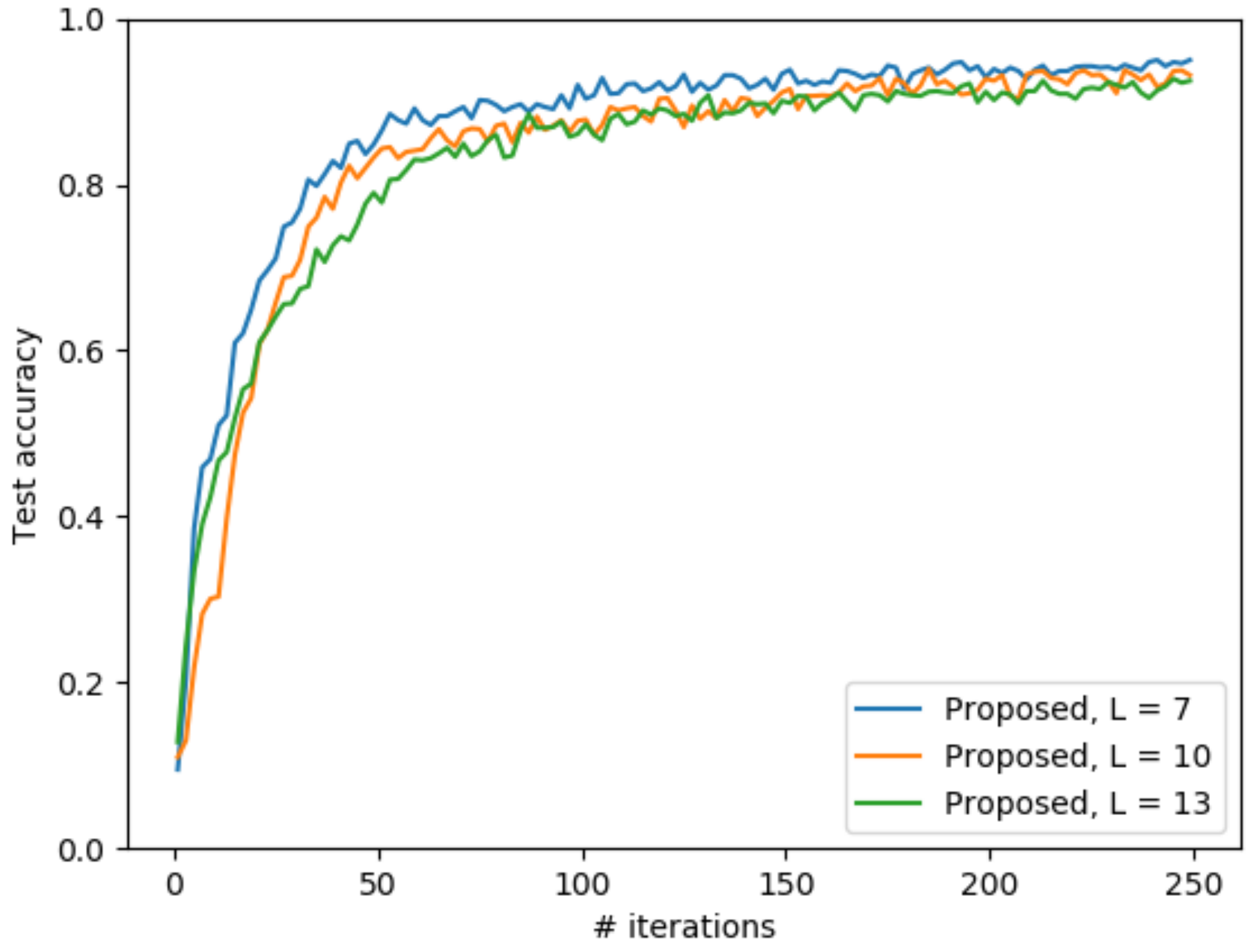} 
     \caption{Effect of the block length $L$ with $B_{\bs} = 5$ bits and $B_{\rho}  = 26$ bits. }
  \vspace{1mm}
  \label{Fig:effect_L}
\end{figure}

\subsection{Effect of the Block Length}
Given the fixed number of bits allocated to each block, the effects of the block length $L$ is evaluated. In particular, the curves of test accuracy versus the number of iterations by varying the block length $L$ are illustrated in Fig.~\ref{Fig:effect_L}. It can be observed that as $L$ increases, the learning performance degrades accordingly in that the quantization error for the stochastic gradients enlarges. The underlying reason is that a larger $L$ implies that the Grassmnnian codebook is generated by the packing algorithm on a higher dimensional Grassmann manifold. This enlarges the pairwise chordal distance between codewords and thus the resulting quantization error is large.

\subsection{Effect of the Edge-Device Number}
Fix the iteration number as $50$, the relationship between the learning
performance and the total number of edge-devices $K$ with various block
length $L$ is illustrated in Fig.~\ref{Fig:effect_user_number}. It can
be observed that as $K$ increases, the learning performance improves
accordingly. This is consistent with the result derived in
Theorem~\ref{Convergence}. Specifically, as indicated
by~\eqref{Eq:learning_convergence}, a larger $K$ reduces the noise
variance, and also makes the aggregated-and averaged stochastic gradient
closer to the true gradient, giving rise to a faster convergence speed.

\begin{figure}[t]
  \centering
\includegraphics[width=0.55\textwidth]{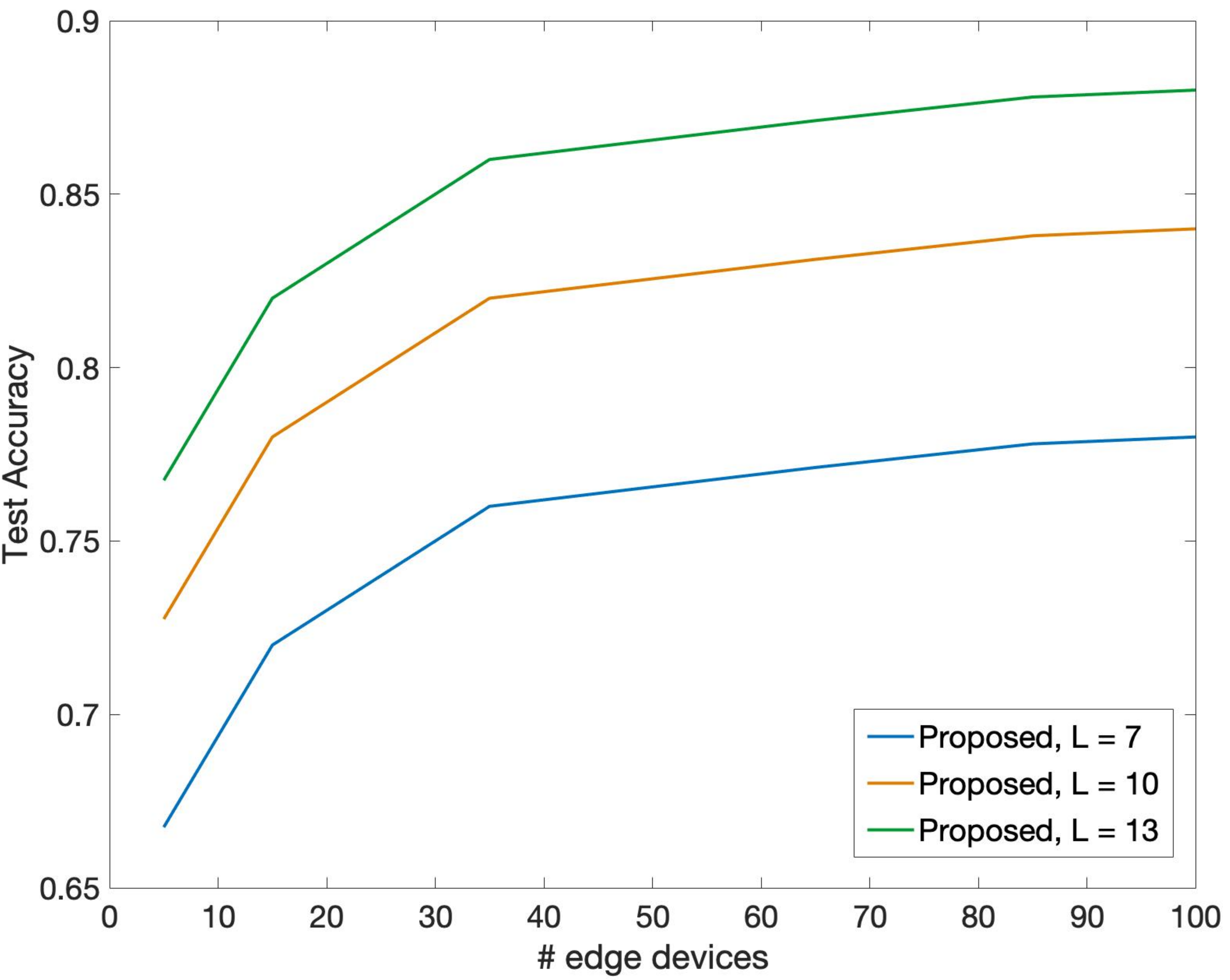} 
     \caption{Effect of the edge-device number $K$ with $B_{\bs} = 5$ bits and $B_{\rho}  = 26$ bits.}
  \vspace{1mm}
  \label{Fig:effect_user_number}
\end{figure}

\section{Concluding Remarks}\label{Conclusion}
In the context of FEEL, by investigating the statistic distribution of
the normalized stochastic gradient, we propose a novel vector
quantization scheme for high-dimensional stochastic gradients. This quantization scheme is of low-complexity, communication-efficient, and convergence-warranted. This work represents the first attempt to quantize high-dimensional stochastic gradients using efficient Grassmannian quantization, which is shown to be more communication-efficient than its state-of-the-art scalar counterpart. In the future, this work can be generalized into applying vector quantization to the accumulated quantization error, which is used for accelerating learning. Moreover, the vector quantization scheme can be further developed by taking the sparsity property and temporal correlation of stochastic gradients into consideration.

\appendix
\subsection{Proof of Lemma~\ref{Lemma: Uniformity}}\label{Proof: Lemma_uniformity}
The normalized stochastic gradient can be written as $\bff = \frac{\bx}{\lVert \bx\rVert}$ due to its uniformity on the Grassmann manifold, where elements of $\bx$ are i.i.d. Gaussian distributed with zero mean and unit variance. Thereby, an arbitrary block gradient can be written as $\bv = \l[\frac{x_m}{\lVert\bx\rVert},\frac{x_{m+1}}{\lVert\bx\rVert},\cdots,\frac{x_n}{\lVert\bx\rVert}\r]^T$ with its norm being $\lVert\bv\rVert = \frac{\sqrt{\sum_{i=m}^nx^2_i}}{\lVert\bx\rVert}$. Then, it follows that 
\begin{equation}
\bs = \frac{\bv}{\lVert\bv\rVert} =  \l[\frac{x_m}{\sqrt{\sum_{i=m}^nx^2_i}},\frac{x_{m+1}}{\sqrt{\sum_{i=m}^nx^2_i}},\cdots,\frac{x_n}{\sqrt{\sum_{i=m}^nx^2_i}}\r]^T.
\end{equation}
This implies that $\bs = \frac{\bv}{\lVert\bv\rVert}$ is uniformly distributed on the Grassmann manifold.

\subsection{Proof of Proposition~\ref{Proposition:hyper-sperical_cap}}\label{Proof:hyper-sperical_cap}
To begin with, by applying the \emph{law of large numbers}, it is easy to show that the hinge vector will converge to the constant vector $\bh_{\sf ref}=\frac{1}{\sqrt{M}}\pmb{1}_{M\times1}$, as $L\to \infty$. Next, we focus on calculating the convergence rate.
According to the definition of the chordal distance, we have
\begin{align}
P_1(r) & = \text{Pr}\l(d_c(\bh_{\sf ref},\bh)>r\r) \nn\\
&= \text{Pr}\l(\sum_{i=1}^M\frac{h_i}{\sqrt{M}}<\sqrt{1-r^2}\r) \nn\\
& =  \text{Pr}\l(\frac{1}{\sqrt{M}}\sum_{i=1}^M\sqrt{\frac{z_i}{\sum_{j=1}^{M}z_j}}<b\r), 
\end{align}
where $b^2 = 1- r^2 = \frac{1-\epsilon_L}{1+\epsilon_L}$ with
$\epsilon_L = \frac{r^2}{2-r^2}$ and $h_i =
\sqrt{\frac{z_i}{\sum_{j=1}^{M}z_j}}$ with $z_i,z_j \sim \mathcal
X^2(L), \forall i,j$. Since $h_i$ and $\sum_{j=1}^{M}z_j$
are independent, we have
\begin{align}
 \text{Pr}\l(d_c(\bh_{\sf
 ref},\bh)>r\r)\cdot\text{Pr}\l(\frac{1}{LM}\sum_{j=1}^M{z_j}<1+\epsilon_L\r)
 =  \text{Pr}\l(d_c(\bh_{\sf ref},\bh)>r, \frac{1}{LM}\sum_{j=1}^M{z_j}<1+\epsilon_L\r). 
\end{align}
Since
\begin{align}
 \text{Pr}\l(d_c(\bh_{\sf ref},\bh) >r,
 \frac{1}{LM}\sum_{j=1}^M{z_j}<1+\epsilon_L\r)
\leq \text{Pr}\l(\frac{1}{\sqrt{M}}\sum_{i=1}^M\sqrt{z_i}<b\sqrt{LM(1+\epsilon_L)}\r), 
 \end{align}
one upper bound can be derived as follows
\begin{equation}\label{P_1_upper}
\text{Pr}\l(d_c(\bh_{\sf ref},\bh)>r\r) \leq \frac{\text{Pr}\l(\frac{1}{\sqrt{M}}\sum_{i=1}^M\sqrt{z_i}<b\sqrt{LM(1+\epsilon_L)}\r)}{\text{Pr}\l(\frac{1}{LM}\sum_{j=1}^M{z_j}<1+\epsilon_L\r)}.
\end{equation}
In the following, we calculate the numerator and the denominator,
respectively. First, we derive an upper bound of the numerator. Define $y_z = \frac{1}{M}\sum_{i=1}^M\sqrt{z_i}$, it follows that
\begin{align}
\text{Pr}\l(\frac{1}{\sqrt{M}}\sum_{i=1}^M\sqrt{z_i}<b\sqrt{LM(1+\epsilon_L)}\r) & = \text{Pr}\l(y_z<\sqrt{L(1-\epsilon_L)}\r)\nn\\
& = \text{Pr}\l(y_z - \mu_{Y} < -\l(\mu_{Y}-\sqrt{L(1-\epsilon_L)}\r)\r)\nn\\
& \leq \text{Pr}\l((y_z - \mu_{Y})^2 > \l(\mu_{Y}-\sqrt{L(1-\epsilon_L)}\r)^2\r),
\end{align}
where the first equality holds given $b^2 =
\frac{1-\epsilon_L}{1+\epsilon_L}$; $\mu_Y = \mathbb E[y_z] = \mathbb E [\frac{1}{M}\sum_{i=1}^M\sqrt{z_i}]$. Then, by \emph{Chebyshev's inequality}, it follows that
\begin{align}\label{FZ_upper}
 \text{Pr}\l(\frac{1}{\sqrt{M}}\sum_{i=1}^M\sqrt{z_i}<b\sqrt{LM(1+\epsilon_L)}\r) \leq \frac{\sigma^2_Y}{ \l(\mu_Y-\sqrt{L(1-\epsilon_L)}\r)^2},
\end{align}
where $\sigma^2_Y = \mathbb E[y^2_z] - \mathbb E^2[y_z]$. In order to derive the closed-form solution for the above bound, it suffices to calculate $\mu_Y$ and $\sigma^2_Y$. Given $z_i \sim \mathcal X^2(L)$, one can have
\begin{equation}
\mu_Y = \mathbb E[\sqrt{z_i}] = \frac{\sqrt{2}\Gamma(\frac{L+1}{2})}{\Gamma(\frac{L}{2})}.
\end{equation}

Then,  by \emph{Stirling's approximation} for gamma function, i.e. $\Gamma(x) \approx \sqrt{2\pi}x^{x-\frac{1}{2}}e^{-x}$, as $x\to \infty$, one can further have that 
\begin{align}\label{mu_lower}
\mu_Y & = \sqrt{L} \cdot e^{\frac{L}{2}\ln (1+\frac{1}{L})-\frac{1}{2}} \overset{(a)}{\geq} \sqrt{L} \cdot e^{-\frac{1}{4L}} \overset{(b)}{\geq} \sqrt{L}\l(1-\frac{1}{4L}\r),
\end{align}
where $(a)$ follows from the fact that $ \ln (1+\frac{1}{L}) \geq \frac{1}{L} - \frac{1}{2L^2}$ and $(b)$ follows from the fact that $e^{-\frac{1}{4L}} \geq 1-\frac{1}{4L}$. Next, we calculate $\sigma^2_Y = \mathbb E[y^2_z]-\mathbb E^2[y_z]$ as follows
\begin{align}
\sigma^2_Y = \mathbb E\l[\frac{1}{M^2}\sum_{i = 1}^{M}z_i + \frac{1}{M^2}\sum_{i\neq j}\sqrt{z_iz_j} \r]-\mathbb E^2[\sqrt{z_i}].
\end{align}
Since $z_i$ and $z_j$ are independent $\mathcal X^2(L)$ distributed random variables, the above equation can be further simplified as
\begin{align}\label{fz_upper}
\sigma^2_Y = \frac{L}{M} - \frac{1}{M}\mu^2_Y \leq \frac{1}{2M} + O\l(\frac{1}{M}L^{-1}\r),
\end{align}
where the inequality follows from~\eqref{mu_lower}.

In the following, we aim to derive a lower bound on $ \l(\mu_Y-\sqrt{L(1-\epsilon_L)}\r)^2$. By~\eqref{mu_lower}, one can have
\begin{equation}
 \l(\mu_Y-\sqrt{L(1-\epsilon_L)}\r)^2 \geq \l(\sqrt{L}-\frac{1}{4\sqrt{L}} -\sqrt{L(1-\epsilon_L)}\r)^2
\end{equation}
Due to the fact that $\sqrt{1-\epsilon_L} \leq1-\frac{\epsilon_L}{2}$ and further write $\epsilon_L = \frac{L^{-\frac{1}{4}}}{2}$, the following result holds.
\begin{equation}\label{fm_lower}
 \l(\mu_Y-\sqrt{L(1-\epsilon_L)}\r)^2 \geq \l(\frac{L^{\frac{1}{4}}}{4}-\frac{1}{4\sqrt{L}}\r)^2 \geq  \frac{1}{2M}L^{\frac{1}{2}},
\end{equation}
where the second inequality follows from the fact that $\frac{L^{\frac{1}{4}}}{4}-\frac{1}{4\sqrt{L}} \geq \frac{L^{\frac{1}{4}}}{10} \geq  \frac{1}{\sqrt{2M}}L^{\frac{1}{4}}$ with $M\geq 50$. It is further noted that the condition, i.e. $M\geq 50$, can always hold in our scenario. Substitute~\eqref{fz_upper} and~\eqref{fm_lower} into~\eqref{FZ_upper}, one can have that
\begin{equation}\label{FM_lower}
\text{Pr}\l(\frac{1}{\sqrt{M}}\sum_{i=1}^M\sqrt{z_i}<b\sqrt{LM(1+\epsilon_L)}\r) \leq L^{-\frac{1}{2}} + O\l(L^{-\frac{3}{2}}\r).
\end{equation}

%

Next, we calculate the denominator. It follows from the fact $\sum_{i=1}^M z_i = \sum_{i=1}^{LM} y_i$ with $y_i \sim \mathcal X^2(1)$ that
\begin{align}
\text{Pr}\l(\frac{1}{LM}\sum_{j=1}^M{z_j}<1+\epsilon_L\r) & = 1 - \text{Pr}\l(\sum_{i=1}^{LM}{y_i}\geq LM(1+\epsilon_L)\r).
\end{align}
Then, by applying the \emph{Chernoff bound}, the following result holds
\begin{align}
\text{Pr}\l(\sum_{i=1}^{LM}{y_i}\geq LM(1+\epsilon_L)\r) \leq e^{ML\cdot\min_{t} \l\{\l[-\frac{1}{2}\ln (1-2t) -t (1+\epsilon_L)\r]\r\}}, \quad t\in[0,\frac{1}{2}),
\end{align}
where the minimum is obtained by setting $t = \frac{1}{2}(1-\frac{1}{1+ \epsilon_L})$. Substituting into the above inequality, one can have that 
\begin{equation}
\text{Pr}\l(\sum_{i=1}^{LM}{y_i}\geq LM(1+\epsilon_L)\r) \leq e^{ML[\frac{1}{2}\ln (1+ \epsilon_L)-\frac{1+\epsilon_L}{2} +\frac{1}{2}]}.
\end{equation}
Due to the fact that $\frac{1}{2}\ln (1+ \epsilon_L)-\frac{1+\epsilon_L}{2} +\frac{1}{2} \leq \frac{\ln 2 -1}{2}\epsilon^2_L, \forall \epsilon_L\in [0,1]$, one can further have
\begin{equation}
\text{Pr}\l(\sum_{i=1}^{LM}{y_i}\geq LM(1+\epsilon_L)\r) \leq e^{-\frac{(1-\ln 2)M }{2}\epsilon^2_LL}.
\end{equation}
 Thereby, the denominator can be bounded as 
 \begin{align}\label{LowerBound_De}
\text{Pr}\l(\frac{1}{LM}\sum_{j=1}^M{z_j}<1+\epsilon_L\r) \geq1 - e^{-\frac{(1-\ln 2)M }{2}\epsilon^2_LL}.
\end{align}

Since $\epsilon_L = \frac{L^{-\frac{1}{4}}}{2}$ and substitute~\eqref{LowerBound_De} and~\eqref{FM_lower} into~\eqref{P_1_upper}, $P_1(r)$ can be bounded as 
\begin{equation}
P_1(r) \leq \frac{L^{-\frac{1}{2}} + O(L^{-\frac{3}{2}})}{1+o(1)} \approx L^{-\frac{1}{2}} + O(L^{-\frac{3}{2}}).
\end{equation}
This completes the whole proof.
\subsection{Proof of Lemma~\ref{Q-error_Hinge}}\label{Proof: Q-error_Hinge}
Directly calculate~\eqref{Union_Bound} is difficult. Instead, we aim to bound the four 
terms in~\eqref{Union_Bound}, respectively, in the following. 

\subsubsection{Calculation of $P_1(r)$}

By Proposition~\ref{Proposition:hyper-sperical_cap}, one can have that $P_1(r) \leq  L^{-\frac{1}{2}}+O(L^{-\frac{3}{2}})$.

\subsubsection{Calculation of $P_2(\alpha,r)$} define $\zeta = \text{Pr}(d_c(\bc,\href)\leq (1+\alpha)r)$, it follows that
\begin{align}
P_2(\alpha,r) &= (1-\zeta)^{N'} = e^{N'\ln(1-\zeta)} \leq e^{-N'\zeta}.\label{Upper_P2}
\end{align}

Moreover, since the codewords are uniformly distributed on the Grassmann manifold, one can calculate $\zeta$ as follows
\begin{align}
\zeta  = \frac{A_{\Bhref( (1+\alpha)r)}}{A_{\text{Manifold}}}= \frac{[(1+\alpha)r]^{\frac{M-1}{2}}}{2\sqrt{\pi}(\frac{M-1}{2})^{\frac{1}{2}}},
\end{align}
where $A_{\Bhref( (1+\alpha)r)}$ and $A_{\text{Manifold}}$ denote the 
surface areas of the ball $\Bhref( (1+\alpha)r)$ and the Grassmann manifold, respectively. 
Furthermore, let $1 + \alpha = r^{-\frac{1}{2}}$ with $r  \approx L^{-\frac{1}{8}}$, 
one can have that
\begin{align}
\zeta & = \frac{[(1+\alpha)r]^{\frac{M-1}{2}}}{2\sqrt{\pi}(\frac{M-1}{2})^{\frac{1}{2}}}= \frac{1}{2\sqrt{\pi}}\l(\frac{1}{L}\r)^{\frac{M-1}{32}}\l(\frac{M-1}{2}\r)^{-\frac{1}{2}}.
\end{align}

Next, set $N' = 2^{B_\bh+1}L^{\frac{M-1}{4}+1} (\frac{M-1}{2})^{\frac{1}{2}}\sqrt{\pi}$, it follows from~\eqref{Upper_P2} that
\begin{align}
P_2(\alpha,r) \leq  e^{-N'\zeta} = e^{-2^{B_\bh}L^{1+\frac{7(M-1)}{32}}} \leq e^{-L}.
\end{align}

\subsubsection{Calculation of $P_3(\alpha,r)$} define $X$ the random variable, which indicates the total number of codewords
lying inside $\Bhref( (1+\alpha)r)$. Thereby, it follows that $X\sim Binomial(N',\zeta)$. Given large $N'$, $X$
can be further approximated as a gaussian distributed random variable, i.e. $X\sim {\mathcal N}(N'\zeta,N'\zeta(1-\zeta))$. Then,
one can have that
\begin{align}
P_3(\alpha,r) & = \text{Pr}(X\geq N)\nn\\
& = \text{Pr}\l(\frac{X-N'\zeta}{\sqrt{N'\zeta(1-\zeta)}}\geq \frac{N-N'\zeta}{\sqrt{N'\zeta(1-\zeta)}}\r)\nn\\
& \leq e^{-\frac{(N-N'\zeta)^2}{2N'\zeta(1-\zeta)}}\nn\\
& \leq e^{-(\frac{N'\zeta}{2}-N)}.\label{P_3_Upper}
\end{align}

Recall that $N' = 2^{B_\bh+1}L^{\frac{M-1}{4}+1} (\frac{M-1}{2})^{\frac{1}{2}}\sqrt{\pi}$ and $\zeta = \frac{1}{2\sqrt{\pi}}\l(\frac{1}{L}\r)^{\frac{M-1}{32}}\l(\frac{M-1}{2}\r)^{-\frac{1}{2}}$, one can have that
\begin{align}
\frac{N'\zeta}{2}-N  > 2^{B_\bh}\l(\frac{L}{2}-1\r) \geq \frac{L}{2}-1.
\end{align}
Substitute the above inequality into~\eqref{P_3_Upper}, the following result holds
\begin{equation}
P_3(\alpha,r) \leq e^{-(\frac{L}{2}-1)}.
\end{equation}

\subsubsection{Calculation of $P_4(\alpha,r) = (1-P_1(r))\left(
  1+\frac{2}{\alpha} \right) D(N')$} since $N'$ codewords are uniformly distributed on the Grassmann manifold, one can have
\begin{align}
D(N') & \leq e^{-\frac{2}{M-1}\ln N'}\nn\\
& = e^{-\frac{2}{M-1}\ln \l(2^{B_\bh+1}L^{\frac{M-1}{4}+1} (\frac{M-1}{2})^{\frac{1}{2}}\sqrt{\pi}\r)}\nn\\
& = e^{-\frac{2}{M-1} \l( \ln 2^{B_\bh} + \ln L^{\frac{M-1}{4}+1}+\ln (\frac{M-1}{2})^{\frac{1}{2}}  + \ln 2\sqrt{\pi}  \r)}\nn\\
& \leq 2^{-\frac{2B_\bh}{M-1}}L^{-\frac{1}{2}-\frac{2}{M-1}}\nn\\
& \leq 2^{-\frac{2B_\bh}{M-1}}L^{-\frac{1}{2}},
\end{align}
where the second inequality follows from the facts that $e^{-\frac{1}{M-1}\ln (\frac{M-1}{2})} \leq 1$ and $e^{-\frac{2}{M-1}\ln 2\sqrt{\pi}} \leq 1$.
Then, substitute $\alpha = r^{-\frac{1}{2}} -1 $ into $P_4(\alpha,r)$, one can have
\begin{equation}
P_4(\alpha,r) \leq (1+\frac{2}{\alpha})D(N') \leq \beta_L2^{-\frac{2B_\bh}{M-1}}L^{-\frac{1}{2}},
\end{equation}
where $\beta_L = \frac{1+r^{\frac{1}{2}}}{1-r^{\frac{1}{2}}}$ with $r =  L^{-\frac{1}{8}}$. Taking the summation on the derived 
four upper bounds, ~\eqref{Upper_hinge} is straightforward.

\subsection{Proof of Theorem~\ref{Theorem: Lower_Upper_Bound}}\label{Proof: Lower_Upper_Bound}
Following from the assumption that the elements of stochastic gradient $\bg$ are i.i.d Gaussian distributed with $0$ mean and unit variance as aforementioned, one can have
\begin{equation}
B \geq \frac{ML}{2}\log_2\frac{1}{D},
\end{equation}
where the inequality follows from the \emph{converse theorem}; $B$ denotes the total number of bits; $D = \frac{\mathbb{E}\left[\lVert\bg-\widehat{\bg}\rVert^2\right]}{ML} = \mathbb E[(g_i-\hat{g}_i)^2], \forall i$. Rearranging the terms, it follows that
\begin{equation}
\ln \frac{\mathbb{E}\left[\lVert\bg-\widehat{\bg}\rVert^2\right]}{ML}  \geq -\frac{2\ln 2}{ML}B.
\end{equation}

Next, we focus on calculation of the upper bound. To begin with, one can show that 
\begin{align}
 \frac{\frac{\rho^2_{\max}}{4}2^{-2B^*_\rho}}{ML\cdot2^{-\frac{2(B^*_{\bs}-1)}{L-1}}}\leq 1,
\end{align}
where $B^*_\rho$ and $B^*_{\bs}$ is defined in~\eqref{Bit: norm} and~\eqref{Bit: block}, respectively. 
Thereby, one upper bound of $\ln \frac{\mathbb{E}\left[\lVert\bg-\widehat{\bg}\rVert^2\right]}{ML}$ is derived in~\eqref{upper_mse} at the top of next page,
\begin{figure*}
\begin{align}
\ln \frac{\mathbb{E}\left[\lVert\bg-\widehat{\bg}\rVert^2\right]}{ML} 
&\leq  \ln \l(2\cdot2^{-\frac{2( B^*_\bs-1)}{L-1}}  + \beta_L2^{-\frac{2B^*_\bh}{M-1}}L^{-\frac{1}{2}}+ L^{-\frac{1}{2}} + O(L^{-\frac{3}{2}})\r)\nn\\
& = \ln \l(2\cdot2^{-\frac{2( B^*_\bs-1)}{L-1}} \l[1+ \frac{\beta_L}{2}L^{-\frac{1}{2}}2^{\frac{2( B^*_\bs-1)}{L-1}-\frac{2B^*_\hv}{M-1}}  + \frac{1}{2}L^{-\frac{1}{2}}2^{\frac{2( B^*_\bs-1)}{L-1}}+ O(L^{-\frac{3}{2}})\r]\r)\nn\\
& \leq -\frac{2\ln 2}{L-1}B^*_{\bs} + \frac{L+1}{L-1}\ln 2 \nn\\
& + \ln \l(1+ \frac{\beta_L}{2}L^{-\frac{1}{2}}2^{\frac{2( B^*_\bs-1)}{L-1}-\frac{2B^*_\hv}{M-1}}  + \frac{1}{2}L^{-\frac{1}{2}}2^{\frac{2( B^*_\bs-1)}{L-1}}+ O(L^{-\frac{3}{2}})\r)\label{upper_mse}
\end{align}
\hrule
\end{figure*}
where $\beta_L = \frac{1+ L^{-\frac{1}{16}}}{1-L^{-\frac{1}{16}}}$. Furthermore, since we consider the quantization problem in the low resolution regime, i.e. $B\leq ML$, implying $B_s < L$, one can have that $2^{\frac{2( B^*_\bs-1)}{L-1}}<4$. Then, it follows that one upper bound of the third term in~\eqref{upper_mse} can be derived in~\eqref{Upper_third_term}.
\begin{figure*}
\begin{align}\label{Upper_third_term}
& \ln \l(1+ \frac{\beta_L}{2}L^{-\frac{1}{2}}2^{\frac{2( B^*_\bs-1)}{L-1}-\frac{2B^*_\hv}{M-1}}  + \frac{1}{2}L^{-\frac{1}{2}}2^{\frac{2( B^*_\bs-1)}{L-1}}+ O(L^{-\frac{3}{2}})\r)\nn\\
 & \leq \ln \l(1+ 2\beta_LL^{-\frac{1}{2}} + 2L^{-\frac{1}{2}}  + O(L^{-\frac{3}{2}})\r) \nn\\
&  \leq  2(\beta_L+1)L^{-\frac{1}{2}}-2(\beta_L+1)^2L^{-1} + O(L^{-\frac{3}{2}}).
\end{align}
\hrule
\end{figure*}
Substituting~\eqref{Upper_third_term} into~\eqref{upper_mse}
with~\eqref{Bit: block} involved, we have
\begin{equation}
\ln \frac{\mathbb{E}\left[\lVert\bg-\widehat{\bg}\rVert^2\right]}{ML}  \leq c_{\sf gap} - \frac{2 \ln 2}{ML}B +O\l(L^{-\frac{3}{2}}\r),\quad L\to \infty,
\end{equation}
where $c_{\sf gap} =  \ln 2-\ln \frac{2L}{L-1} +\frac{2}{ML}\ln \frac{ML+\sqrt{ML}}{2}+\frac{L-1}{L}\ln \frac{2L}{L-1}+\frac{2}{L}\ln 2+\frac{2(M-1)}{ML}\ln \frac{2}{M-1}+\frac{\ln 2}{ML}+ \frac{M-1}{ML}\ln \beta_LM\sqrt{L} + 2(\beta_L+1)L^{-\frac{1}{2}} -2(\beta_L+1)^2L^{-1}$ is a constant given $M$ and $L$. This completes the whole proof.
\subsection{Proof of Theorem~\ref{Convergence}}\label{Proof: Convergence}
Take Assumption~\ref{Assump:smoothness}, one can have that 
\begin{equation}
F_{n+1} - F_{n} \leq \bar{\bg}^T_n(\boldsymbol\theta_{n+1}-\boldsymbol\theta_{n}) + \sum_{i=1}^{\sf Dim}\frac{l_i}{2}(\boldsymbol\theta_{n+1}-\boldsymbol\theta_{n})^2_i,
\end{equation}
where $F_{n}$ denotes the global objective at the $n$-th iteration; $\bar{\bg}_n = \nabla F_{n}$ is the gradient of the global objective and $\boldsymbol\theta_{n} = \boldsymbol\theta [n]$ is the model parameter; $(\boldsymbol\theta_{n+1}-\boldsymbol\theta_{n})^2_i$ denotes the square of the $i$-th coefficient of $\boldsymbol\theta_{n+1}-\boldsymbol\theta_{n}$.

By taking the expectation under the current model $\boldsymbol\theta_{n}$ on the both side of the above inequality, it follows that 
\begin{align}\label{Eq:Upper_whole_term}
\mathbb E \l[F_{n+1} - F_{n}|\boldsymbol\theta_{n}\r] &\leq \mathbb E \l[\bar{\bg}^T_n(\boldsymbol\theta_{n+1}-\boldsymbol\theta_{n})|\boldsymbol\theta_{n}\r]\nn\\
 &+ \mathbb E \l[\sum_{i=1}^{\sf Dim}\frac{l_i}{2}(\boldsymbol\theta_{n+1}-\boldsymbol\theta_{n})^2_i|\boldsymbol\theta_{n}\r].
\end{align}

In the following, we treat $\mathbb E \l[\bar{\bg}^T_n(\boldsymbol\theta_{n+1}-\boldsymbol\theta_{n})|\boldsymbol\theta_{n}\r]$ and $\mathbb E \l[\sum_{i=1}^{\sf Dim}\frac{l_i}{2}(\boldsymbol\theta_{n+1}-\boldsymbol\theta_{n})^2_i|\boldsymbol\theta_{n}\r]$, separately.

\underline{\textbf{(1) Calculation of} $\mathbb E \l[\bar{\bg}^T_n(\boldsymbol\theta_{n+1}-\boldsymbol\theta_{n})|\boldsymbol\theta_{n}\r]$: }

Plug in the model-update step~\eqref{ModelUpdate}, one can have that
\begin{equation}
\mathbb E \l[\bar{\bg}^T_n(\boldsymbol\theta_{n+1}-\boldsymbol\theta_{n})|\boldsymbol\theta_{n}\r] = \mathbb E \l[\frac{-\eta}{K}\bar{\bg}^T_n\sum_{k = 1}^{K}\widehat{\bg}^{(k)}_n|\boldsymbol\theta_{n}\r],
\end{equation}
where $\widehat{\bg}^{(k)}_n = \l\lVert\widehat{\bg}^{(k)}_n\r\rVert\widehat{\bff}^{(k)}_n$ denotes the quantized version of the stochastic gradient from the $k$-th edge device at the $n$-th iteration. Let $\widehat{\bg}^{(k)}_n = {\bg}^{(k)}_n - \boldsymbol\Delta^{(k)}_{n}$ with $\lVert\boldsymbol\Delta^{(k)}_{n}\rVert^2$ denoting the quantization error, the above equation can be rewritten as 
\begin{equation}\label{Eq:inter}
\mathbb E \l[\bar{\bg}^T_n(\boldsymbol\theta_{n+1}-\boldsymbol\theta_{n})|\boldsymbol\theta_{n}\r] = \mathbb E \l[\frac{-\eta}{K}\bar{\bg}^T_n\sum_{k = 1}^{K}\l({\bg}^{(k)}_n - \boldsymbol\Delta^{(k)}_{n}\r)|\boldsymbol\theta_{n}\r].
\end{equation}

Furthermore, given $\lVert\widehat{\bg}^{(k)}_n\rVert = \lVert{\bg}^{(k)}_n\rVert - \Delta\rho_{n}^{(k)}$ and $\widehat{\bff}^{(k)}_n = {\bff}^{(k)}_n - \Delta\bff_{n}^{(k)}$ with $\mathbb{E}[\Delta\rho^{(k)}_n] = 0$ and $\mathbb{E}[\Delta\bff^{(k)}_n] = {\bf 0} \in \mathbb{R}^{{\sf Dim}\times 1}, \forall n,k$, it follows that 
\begin{equation}
\mathbb E [\widehat{\bg}^{(k)}_n] = \mathbb E \l[\l( \lVert{\bg}^{(k)}_n\rVert - \Delta\rho_{n}^{(k)}\r)\l( {\bff}^{(k)}_n - \Delta\bff_{n}^{(k)}\r)\r] = \mathbb E [{\bg}^{(k)}_n].
\end{equation}
Due to the fact that $\mathbb E [\widehat{\bg}^{(k)}_n] = \mathbb E [{\bg}^{(k)}_n] - \mathbb E [\boldsymbol\Delta^{(k)}_n]$, one can conclude that $\mathbb E [\boldsymbol\Delta^{(k)}_n] = {\bf 0}  \in \mathbb{R}^{{\sf Dim}\times 1}$. Thereby, it follows from~\eqref{Eq:inter} that 
\begin{equation}\label{Eq: first_term}
\mathbb E \l[\bar{\bg}^T_n(\boldsymbol\theta_{n+1}-\boldsymbol\theta_{n})|\boldsymbol\theta_{n}\r] = \mathbb E \l[\frac{-\eta}{K}\bar{\bg}^T_n\sum_{k = 1}^{K}{\bg}^{(k)}_n|\boldsymbol\theta_{n}\r]= -\eta\lVert \bar{\bg}_n\rVert^2,
\end{equation}
where the second equality follows from Assumption~\ref{Assump:variance_bound} that $\mathbb E[\bg^{(k)}_n] = \bar{\bg}_n, \forall k$.

\underline{\textbf{(2) Calculation of} $\mathbb E \l[\sum_{i=1}^{\sf Dim}\frac{l_i}{2}(\boldsymbol\theta_{n+1}-\boldsymbol\theta_{n})^2_i|\boldsymbol\theta_{n}\r]$:}

Let $l_0 = \lVert\bl\rVert_{\infty}$ with $\bl$ defined in Assumption~\ref{Assump:variance_bound}, one can have that
\begin{equation}
\mathbb E \l[\sum_{i=1}^{\sf Dim}\frac{l_i}{2}(\boldsymbol\theta_{n+1}-\boldsymbol\theta_{n})^2_i|\boldsymbol\theta_{n}\r]\leq \mathbb E \l[ \frac{l_0}{2} \lVert\boldsymbol\theta_{n+1}-\boldsymbol\theta_{n}\rVert^2|\boldsymbol\theta_{n}\r]
\end{equation}

Since $\boldsymbol\theta_{n+1}-\boldsymbol\theta_{n} = \frac{\eta}{K}\sum_{k = 1}^{K}\widehat{\bg}^{(k)}_n$ with $\widehat{\bg}^{(k)}_n$ denoting the quantized stochastic gradient from the $k$-th local device at the $n$-th iteration, it follows that 
\begin{align}\label{Eq:Converge_Upper_inter}
\mathbb E \l[\sum_{i=1}^{\sf Dim}\frac{l_i}{2}(\boldsymbol\theta_{n+1}-\boldsymbol\theta_{n})^2_i|\boldsymbol\theta_{n}\r]&\leq\mathbb E \l[ \frac{l_0}{2} \lVert\boldsymbol\theta_{n+1}-\boldsymbol\theta_{n}\rVert^2|\boldsymbol\theta_{n}\r] \nn\\
& = \mathbb E \l[ \frac{\eta^2l_0}{2K^2}\l\lVert\sum_{k = 1}^{K}\widehat{\bg}^{(k)}_n\r\rVert^2|\boldsymbol\theta_{n}\r].
\end{align}

Due to the fact that arbitrary two different high-dimensional vectors are quasi-orthogonal, the following result holds
\begin{equation}\label{Eq:quasi_transform}
\l\lVert\sum_{k = 1}^{K}\widehat{\bg}^{(k)}_n\r\rVert^2 = \l(\sum_{k = 1}^{K}\widehat{\bg}^{(k)}_n\r)^T\l(\sum_{k = 1}^{K}\widehat{\bg}^{(k)}_n\r)\approx \sum_{k = 1}^{K}\l\lVert{\widehat\bg}^{(k)}_n\r\rVert^2.
\end{equation}

Substituting~\eqref{Eq:quasi_transform} into~\eqref{Eq:Converge_Upper_inter}, the following inequality follows
\begin{equation}\label{Ineq: second_term}
\mathbb E \l[\sum_{i=1}^{\sf Dim}\frac{l_i}{2}(\boldsymbol\theta_{n+1}-\boldsymbol\theta_{n})^2_i|\boldsymbol\theta_{n}\r]\leq  \frac{\eta^2l_0}{2K^2}\sum_{k = 1}^{K}\mathbb E \l[\l\lVert{\widehat\bg}^{(k)}_n\r\rVert^2|\boldsymbol\theta_{n}\r].
\end{equation}

Moreover, since $\widehat{\bg}^{(k)}_n = {\bg}^{(k)}_n - \boldsymbol\Delta^{(k)}_{n}$, one can have that
\begin{equation}
\lVert\widehat{\bg}^{(k)}_n\rVert^2 = ({\bg}^{(k)}_n - \boldsymbol\Delta^{(k)}_{n})^T({\bg}^{(k)}_n - \boldsymbol\Delta^{(k)}_{n}) \overset{(a)}{\approx} \lVert{\bg}^{(k)}_n\rVert^2 + \lVert\boldsymbol\Delta^{(k)}_n\rVert^2,\nn
\end{equation}
where $(a)$ follows from the same argument that high-dimensional vectors are quasi-orthogonal. Then, the inequality~\eqref{Ineq: second_term} can be simplified as 
\begin{align}
\mathbb E \l[\sum_{i=1}^{\sf Dim}\frac{l_i}{2}(\boldsymbol\theta_{n+1}-\boldsymbol\theta_{n})^2_i|\boldsymbol\theta_{n}\r] & \leq  \frac{\eta^2l_0}{2K^2}\sum_{k = 1}^{K}\mathbb E \l[\l\lVert{\bg}^{(k)}_n\r\rVert^2|\boldsymbol\theta_{n}\r] \nn\\
&+ \frac{\eta^2l_0}{2K^2}\sum_{k = 1}^{K}{\mathbb E}\l[\l\lVert\bg-\widehat\bg\r\rVert^2\r],
\end{align}
where ${\mathbb E}[\lVert\bg-\widehat\bg\rVert^2] = \mathbb E [\lVert\boldsymbol\Delta^{(k)}_n\rVert^2|\boldsymbol\theta_{n}],\forall n,k$. Next, by Assumption~\ref{Assump:variance_bound}, it can be obtained that $\mathbb E [\lVert{\bg}^{(k)}_n\rVert^2|\boldsymbol\theta_{n}] = \lVert \boldsymbol\sigma\rVert^2+\lVert \bar{\bg}_n\rVert^2$ with $\bar{\bg}_n$ denoting the gradient of at the $n$-th iteration. Then, one can have that
\begin{align}\label{Eq:Upper_second_term}
\mathbb E \l[\sum_{i=1}^{\sf Dim}\frac{l_i}{2}(\boldsymbol\theta_{n+1}-\boldsymbol\theta_{n})^2_i|\boldsymbol\theta_{n}\r]& \leq \frac{\eta^2l_0}{2K}\l(\lVert \boldsymbol\sigma\rVert^2+\lVert \bar{\bg}_n\rVert^2\r) \nn\\
& + \frac{\eta^2l_0}{2K}{\mathbb E}\l[\l\lVert\bg-\widehat\bg\r\rVert^2\r].
\end{align}

By substituting~\eqref{Eq: first_term} and~\eqref{Eq:Upper_second_term} into~\eqref{Eq:Upper_whole_term}, the following result holds
\begin{align}
\mathbb E \l[F_{n+1} - F_{n}|\boldsymbol\theta_{n}\r] &\leq -\eta\lVert \bar{\bg}_n\rVert^2 + \frac{\eta^2l_0}{2K}{\mathbb E}\l[\l\lVert\bg-\widehat\bg\r\rVert^2\r] \nn\\
& + \frac{\eta^2l_0}{2K}\l(\lVert \boldsymbol\sigma\rVert^2+\lVert \bar{\bg}_n\rVert^2\r).
\end{align}

Next, extend the expectation over randomness in the trajectory, and perform a telescoping sum over the all the iterations, one lower bound on $F_0-F^*$ is derived in~\eqref{Lower_Sum}.
\begin{figure*}
\begin{align}\label{Lower_Sum}
F_0-F^* & \geq F_0 -\mathbb E[F_N]\nn\\
&= \mathbb E\l[\sum_{n=0}^{N-1}\l(F_n-F_{n+1}\r)\r]\nn\\
& \geq\mathbb E\l[\sum_{n=0}^{N-1} \l(\eta\lVert \bar{\bg}_n\rVert^2 - \frac{\eta^2l_0}{2K}{\mathbb E}\l[\l\lVert\bg-\widehat\bg\r\rVert^2\r] - \frac{\eta^2l_0}{2K}\l(\lVert \boldsymbol\sigma\rVert^2+\lVert \bar{\bg}_n\rVert^2\r)\r)\r]
\end{align}
\hrule
\end{figure*}

Substituting $\eta = \frac{1}{\sqrt{l_0N}}$ into the above inequality and rearrange the terms, it follows that
\begin{align}
\mathbb E\l[\frac{1}{N}\sum_{n=0}^{N-1}\lVert\bar{\bg}_n\rVert^2\r] &
\leq  
\frac{\sqrt{l_0}\l(\frac{1}{2K}{\mathbb E}\l[\l\lVert\bg-\widehat\bg\r\rVert^2\r] +\frac{\lVert\boldsymbol\sigma\rVert^2}{2K}+F_0-F^*\r)}{\sqrt{N}-\frac{\sqrt{l_0}}{2K}}.
\end{align}

This completes the whole proof.

\bibliography{reference}
\bibliographystyle{IEEEtran}

\end{document}